\documentclass[journal]{IEEEtran}

\usepackage[english]{babel}
\usepackage[utf8]{inputenc}
\usepackage{amsmath,amssymb,bm,mathtools}
\usepackage{amsthm}
\usepackage{booktabs,multirow}
\usepackage{graphicx}
\usepackage{xcolor}
\usepackage{xparse}
\usepackage{cite}
\usepackage{microtype}
\let\newblock\relax
\NewDocumentCommand{\newblock}{+g}{\IfNoValueTF{#1}{\relax}{#1}}
\NewDocumentCommand{\vtwo}{+g}{\IfNoValueTF{#1}{\relax}{#1}}
\graphicspath{{figures/}{./}}

\newtheorem{theorem}{Theorem}
\newtheorem{corollary}{Corollary}
\newtheorem{remark}{Remark}
\begin{document}

\title{Branch-Level Energy Localization in Three-Phase Loads: Resolving Indeterminacy in Time-Domain}

\author{Francisco G. Montoya, Francisco de Leon, \IEEEmembership{Life Fellow, IEEE},
Francisco M. Arrabal-Campos, and Alfredo Alcayde%
\thanks{F. G. Montoya, F. M. Arrabal-Campos, and A. Alcayde are with the Department of Engineering, University of Almeria, Spain. F. de Leon is with the Department of Electrical and Computer Engineering, New York University, USA.}}

\maketitle

\begin{abstract}
This paper develops a branch-level energy-localization framework for three-phase loads. The instantaneous terminal power of an admissible lumped equivalent is decomposed uniquely as Joule dissipation plus magnetic and electric stored-energy rates, branch by branch. Three formal results are established: a Branch-Level Localization Theorem (uniqueness given an admissible topology); a Topology-Indeterminacy Theorem (multiple admissible topologies reproduce identical terminal data with distinct localizations); and a Generalized Energetic Duality Theorem that organizes classical electrical dualities (Norton-Thevenin, series--parallel, $L$ vs $C$, $R$ vs $G$) as restrictions to Linear Time Invariant (LTI) sinusoidal regimes of a single time-domain principle in which constant-parameter equivalence is replaced by time-varying parameters. The framework is exercised on six test cases including the de Leon--Cohen open-phase paradox, switched-resistive loads, three-wire delta-versus-wye-virtual indeterminacy, fluctuating-phase loads, and a four-wire nonlinear load with hysteretic, linear, and switched branches. The framework is positioned as complementary to IEEE Std.~1459, CPC, instantaneous $p$--$q$, and Fryze-Buchholz-Depenbrock: each answers a different question, and the apparent paradoxes vanish once the question is posed precisely.
\end{abstract}

\begin{IEEEkeywords}
Energy accounting, instantaneous power, load equivalents, power theory, three-phase systems, three-wire systems, time-domain measurements, topology indeterminacy.
\end{IEEEkeywords}

\section{Introduction}

\IEEEPARstart{P}{ower} theory has a long-standing tension between two questions \cite{emanuel2010power,IEEE1459-2025}. The first is operational: which numerical quantities, computed from voltage and current waveforms at a measurement section, are useful for metering, equipment rating, compensation design, protection, and converter control? The second is physical: which of those quantities correspond to actual energy flow, irreversible dissipation, or the time rate of stored electromagnetic energy in the lumped circuit that the section terminates? The first question has been answered, in different ways, by Steinmetz \cite{steinmetz1916theory}, Budeanu, Fryze, Shepherd--Zakikhani, Kusters--Moore \cite{kusters1980determination}, Czarnecki's Currents' Physical Components \cite{czarnecki2008currents}, the Fryze--Buchholz--Depenbrock framework \cite{depenbrock1980active,staudt2008fryze}, the instantaneous $p$--$q$ theory of Akagi et al.~\cite{akagi1983generalized,akagi1984instantaneous,akagi2017instantaneous}, or IEEE Std.~1459 \cite{IEEE1459-2025}. The second question is harder; it has produced a sequence of paradoxes \cite{cakareski1999physical,emanuel2004poynting,czarnecki1997energy,garcia2007power,willems2010budeanu,calamaro2015review} where mathematically valid waveform decompositions yield mutually incompatible interpretations of where energy is dissipated, stored, or exchanged.

A recent analysis by some of the present authors \cite{poyatos2025comparative} examined four well-known single-phase paradoxes through the lens of Budeanu, Fryze, CPC, CPT, and the Maxwell-Poynting-based interpretation of de Leon and Cohen \cite{deleon2010poynting,de2005discussion,de2008discussion}, and concluded that classical theories often disagree on the same data. The present paper takes the next step and addresses three-phase systems, where the difficulty is structurally larger: in three-wire connections the neutral point is not accessible, line currents are linear combinations of internal branch currents, and several admissible lumped topologies can reproduce the same terminal observables. Aggregation across phases can also conceal local exchange, so that per-phase storage variations vanish at the three-phase boundary while remaining physically present.

\newblock{The thesis of this paper is that classical power-theory paradoxes in three-phase systems are not failures of physics or of any specific theory; they are symptoms of describing the system through terminal-only coordinates. Once a physically admissible lumped equivalent has been identified, the instantaneous terminal power admits a unique branch-level decomposition into Joule dissipation and stored-energy rates. The paradoxes then disappear by exhibition: the framework returns the location, sign, and time profile of every storage and dissipation contribution.}

The paper makes the following contributions. First, an energy-balance criterion for three-phase measured sections is formalized as a circuit-level statement, written in the quasi-static lumped regime where Poynting's theorem reduces to Tellegen's theorem \cite{tellegen1952general} together with constitutive relations. \newblock{No claim is made that reactive or apparent power follows from Maxwell's equations in any direct sense; field-theoretic localization is the proper subject of Poynting-based analyses \cite{emanuel2004poynting,calamaro2015review}, which are complementary to the present circuit-level treatment.} Second, three formal results are stated and proven: the Branch-Level Localization Theorem (uniqueness given an admissible topology), the Topology-Indeterminacy Theorem (multiple admissible topologies reproducing identical terminal data with distinct localizations), and the Generalized Energetic Duality Theorem (for any branch terminal pair, every admissible topology family can be parameterized to reproduce the same branch energy decomposition; classical Norton-Thevenin, series-parallel, and $L\leftrightarrow C$ dualities are special cases of this principle in which the parameter trajectories degenerate to constants). Third, the framework is exercised on six test cases, including an explicit indeterminacy construction not previously documented, a switched-resistive nonlinear load that exposes the coordinate-artifact nature of classical distortion, scattered, void, and instantaneous reactive components when no storage element is present, and a four-wire nonlinear case combining hysteresis, linear $RLC$, and switched branches that documents the framework on non-three-wire and structurally heterogeneous loads. Fourth, a comparative table evaluates how IEEE Std.~1459, CPC, instantaneous $p$--$q$, and FBD describe these cases relative to the branch-level interpretation; the framework is positioned as complementary, not as a replacement.

The lumped equivalents used to perform the localization are obtained from the time-domain identification machinery developed in companion work for single-phase \cite{montoyaTPWRD1,arrabal2025experimental} and three-phase \cite{montoya3phaseID} systems; the identification methodology is invoked here only as enabling technology and is not redeveloped. The paper is organized as follows. Section~\ref{sec:framework} establishes the energy-balance framework, proves the three theorems, and discusses the duality hierarchy with a short pedagogical illustration in single-phase. Section~\ref{sec:localization} specializes the framework to three-phase delta-parallel and wye-series equivalents and discusses the role of three-wire connections. Section~\ref{sec:paradoxes} presents the six test cases. Section~\ref{sec:comparison} compares the localization with classical theories. Sections~\ref{sec:discussion} and \ref{sec:conclusion} discuss limitations and conclude.

\section{Energy-Balance Framework}
\label{sec:framework}

\subsection{Working assumptions and notation}

Let $\Omega$ denote a three-phase measured section described as a connected lumped-element subnetwork. Instantaneous voltages and currents at the terminal ports are the only available observables. \vtwo{Unless otherwise stated, voltages and currents are functions of time, and their explicit argument $t$ is omitted for brevity.} Inside $\Omega$, the network is composed of resistors, inductors, and capacitors with constant or slowly varying parameters; nonlinear or genuinely time-varying behavior is treated within the same framework by allowing branch parameters to depend on time, as is consistent with the identification literature \cite{montoyaTPWRD1,arrabal2025experimental}; see Fig. \ref{fig:fig1}(a). The quasi-static lumped regime is assumed throughout: characteristic dimensions of $\Omega$ are small compared with the wavelength of every relevant frequency component, so that displacement-current effects external to capacitive elements are negligible and Tellegen's theorem applies \cite{tellegen1952general}.

The Poynting integral form, which sets the global energy balance for $\Omega$, reduces in this regime to a circuit-level statement, say Kirchhoff Laws . \newblock{This statement is used here as the working axiom and is not derived from Maxwell's equations within the paper. Field-theoretic accounts of polyphase energy flow are available \cite{emanuel2004poynting,calamaro2015review}; the present analysis operates one level higher, on lumped quantities, where Tellegen's theorem provides the conservation relation between branch and terminal powers.}

\subsection{Branch energy balance}

Let $\mathcal R$, $\mathcal L$, and $\mathcal C$ denote the sets of resistive, inductive, and capacitive branches of an admissible lumped representation of $\Omega$. With currents $i_r$, $i_\ell$ and voltages $v_c$ associated with the corresponding elements, define
\begin{equation}
p_R=\sum_{r\in\mathcal R} R_r i_r^2,
\label{eq:joule_total}
\end{equation}
\begin{equation}
W_L=\sum_{\ell\in\mathcal L}\tfrac12 L_\ell i_\ell^2,\quad
W_C=\sum_{c\in\mathcal C}\tfrac12 C_c v_c^2.
\label{eq:stored_energies}
\end{equation}
Tellegen's theorem applied to the lumped equivalent of $\Omega$ gives the instantaneous terminal balance
\begin{equation}
p_{\rm term}=\bm v^\top\bm i=p_R(t)+W_L'(t)+W_C'(t),
\label{eq:tellegen_balance}
\end{equation}
where $\bm v$ and $\bm i$ collect the port voltages and currents in vector form and the prime denotes time derivative.

\newblock{Equation~\eqref{eq:tellegen_balance} is not a definition of new power components and does not need one. It is the equality of two scalar functions of time: the inner product $\bm v^\top\bm i$ measured at the terminal ports, and the algebraic sum of branch contributions inside $\Omega$. Whenever the lumped equivalent reproduces the measurements, the equality holds pointwise; when it does not, the residual is itself the diagnostic.}

\newblock{For a single resistive, inductive, or capacitive branch, the branch contributions to \eqref{eq:tellegen_balance} are
\begin{equation}
p_{R,r}=R_r i_r^2=G_r v_r^2,
\label{eq:branch_joule}
\end{equation}
\begin{equation}
W_{L,\ell}'=i_\ell L_\ell  i_\ell'=i_\ell v_{L,\ell},
\label{eq:branch_inductor_rate}
\end{equation}
\begin{equation}
W_{C,c}'=v_c C_c v_c'=v_c i_{C,c},
\label{eq:branch_capacitor_rate}
\end{equation}
where $G_r=1/R_r$ and the alternative right-hand sides hold by the constitutive relations $v_r=R_r i_r$, $v_{L,\ell}=L_\ell i_\ell'$, and $i_{C,c}=C_c v_c'$. The branch-level decomposition of $p_{\rm term}$ is therefore the term-by-term sum of \eqref{eq:branch_joule}--\eqref{eq:branch_capacitor_rate}.}

\subsection{Operational definition of physical power terms}

Three classes of physical power terms are admitted: terminal flow $p_{\rm term}$, irreversible Joule dissipation $p_R$, and the magnetic and electric stored-energy rates $W_L'$ and $W_C'$. Any quantity that can be associated with an entry in \eqref{eq:tellegen_balance} for an identified equivalent of $\Omega$ is granted the status of physical term in the sense of this paper. Traditional quantities that cannot be so associated, including the apparent power $S$, the Budeanu reactive power $Q_B$, the Czarnecki CPC components, and the instantaneous Akagi $p$--$q$ coordinates, retain their value as engineering indices but are not treated as physical power terms in the same sense. They are not declared incorrect; the claim is only that they are objects of a different kind. Section~\ref{sec:comparison} examines this distinction case by case.

\subsection{Branch-Level Localization Theorem}

The first formal result asserts uniqueness of the localization given an admissible topology and identified parameters.

\begin{theorem}[Branch-Level Localization]
\label{thm:localization}
Let $\mathcal T$ be an admissible lumped topology for $\Omega$ with branch sets $\mathcal R\cup\mathcal L\cup\mathcal C$, and let $\bm\theta=\{R_r,L_\ell,C_c\}$ be a parameter vector consistent with the terminal observables in the sense that the lumped equivalent reproduces $\bm v,\bm i$ within a residual $\rho_E\le\tau$ over the analysis window. Then the decomposition
\begin{equation}
p_{\rm term}=\sum_{r\in\mathcal R} R_r i_r^2+\sum_{\ell\in\mathcal L} L_\ell i_\ell i_\ell'+\sum_{c\in\mathcal C} C_c v_cv_c'
\label{eq:branch_decomposition}
\end{equation}
is unique.
\end{theorem}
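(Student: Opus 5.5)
The argument has two parts. First, every term on the right of \eqref{eq:branch_decomposition} is a well-defined function of time once $(\mathcal T,\bm\theta)$ and the terminal observables are fixed. Second, the sum of those terms equals $p_{\rm term}$ by \eqref{eq:tellegen_balance}.

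The precise meaning of uniqueness must be fixed before anything is proved. The claim is not that $p_{\rm term}$ has a unique additive splitting in the abstract, which is false. The claim is that, for the given topology and parameters, the branch summands $p_{R,r}$, $W_{L,\ell}'$, $W_{C,c}'$ are uniquely determined as time functions. So I will show that the map $(\bm v,\bm i)\mapsto\{i_r,i_\ell,v_c\}$ induced by $(\mathcal T,\bm\theta)$ is single-valued on the analysis window.

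\emph{Step 1: internal variables.} Write KCL and KVL for $\mathcal T$ with the terminal voltages (or currents) imposed as sources. Add the constitutive relations $v_r=R_ri_r$, $v_{L,\ell}=L_\ell i_\ell'$ and $i_{C,c}=C_cv_c'$. This gives a linear differential-algebraic system, or a linear time-varying one if the parameters vary, in the state $(i_\ell,v_c)$ with algebraic resistive variables. Because the equivalent is admissible, this system has a unique solution once the initial state is fixed. Admissibility here means no capacitor-only loops and no inductor-only cutsets, so the state is minimal, and $R_r>0$ where needed. The initial state is pinned down by the consistency hypothesis: the equivalent reproduces $\bm v,\bm i$ over the window. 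If the state were not determined by the observables, two solutions would differ by a source-free trajectory. Such a trajectory is terminally invisible and is a solution of the passive source-free network, which decays or vanishes. I will argue that admissibility, understood as observability of the internal state from the ports, excludes it.

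\emph{Step 2: assembly.} With $i_r$, $i_\ell$ and $v_c$ unique, each branch term in \eqref{eq:branch_joule}--\eqref{eq:branch_capacitor_rate} is unique, and so is their sum. Tellegen's theorem, applied through \eqref{eq:tellegen_balance}, equates this sum to $\bm v^\top\bm i$ pointwise. The residual bound $\rho_E\le\tau$ carries over as a bound on the power mismatch by Cauchy--Schwarz.

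The main obstacle is Step 1: ruling out terminally unobservable internal modes. An example is a circulating current in a delta that leaves the line currents unchanged, or a floating capacitor voltage. Such a mode would change the localization without changing any terminal measurement. My first approach is to build observability into the definition of an admissible topology. Failing that, I will show that any such mode is a source-free solution of a passive RLC network. Its energy then satisfies $W'=-p_R\le0$, and it can persist over the window only if it dissipates nothing. It would then be a lossless LC loop, which admissibility excludes.
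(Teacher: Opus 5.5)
Your two-step skeleton is the paper's own. Its proof simply asserts that, for a fixed admissible topology, KCL, KVL and the constitutive laws make $\{i_r,i_\ell,v_c\}$ linear functions of the port data, and then invokes Tellegen. You are right that this assertion hides the initial state, i.e.\ the constants of integration, which the paper never addresses. However, your argument for disposing of terminally invisible modes fails, for two reasons.

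First, $W'=-p_R\le 0$ shows only that a dissipative invisible mode decays, not that it vanishes on a finite analysis window. Take a delta of series $R$--$L$ branches whose ratio $\tau=L_{xy}/R_{xy}$ is common to all three branches. Adding $c\,e^{-t/\tau}$ to every branch current satisfies each branch equation with unchanged $v_{xy}$ and leaves all line currents unchanged. It nevertheless alters every $R_{xy}i_{xy}^2$, and this topology meets your admissibility conditions.

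Second, those conditions (no capacitor-only loops, no inductor-only cutsets) govern the order of the state equations. The lossless invisible modes instead live in inductor-only loops and capacitor-only cutsets, which you do not exclude. The delta circulating current you mention is exactly such a mode. It occurs in the paper's own delta-parallel $RL$ equivalent \eqref{eq:delta_parallel} of Section~\ref{subsec:topology_indeterminacy}, which satisfies both of your conditions: it has no capacitors, and every cutset through an inductor also cuts its parallel conductance. Replace each inductor current $\Gamma_{xy}\breve v_{xy}$ by $\Gamma_{xy}\breve v_{xy}+k$. Every branch voltage and every line current ($i_a=i_{ab}-i_{ca}$, etc.) is unchanged, yet $W'_{L,xy}=i_{\ell,xy}v_{xy}$ shifts by $k\,v_{xy}$. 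The shifts cancel only in the sum, because $v_{ab}+v_{bc}+v_{ca}=0$. Dually, a three-wire wye-series equivalent with capacitors and an unmeasured neutral admits a common constant offset $k$ of the capacitor voltages, which shifts $W'_{C,x}$ by $k\,i_x$. Your ban on capacitor-only loops would meanwhile reject the delta $C_{xy}$ triangle, which is harmless here because each capacitor voltage equals a measured $v_{xy}$.

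So uniqueness of the internal variables cannot be derived from consistency with the terminal data; it has to be imposed. One option is your first one: make terminal observability of the internal state part of the definition of admissibility. That, however, disqualifies the delta-parallel family used throughout the paper. The other option is the convention the paper adopts implicitly. The primitives $\breve v$ and $\breve\imath$ are fixed, for example as zero-mean, periodic-steady-state primitives over the window, and the wye neutral is pinned by $v_{xn'}=\tfrac13(v_{xy}-v_{zx})$. Under that convention the internal variables are explicit linear functionals of the port data, and the proof reduces to the paper's argument plus Tellegen. Your Cauchy--Schwarz remark on $\rho_E$ then remains a valid refinement that the paper omits.
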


\begin{IEEEproof}
By Tellegen's theorem applied to $\mathcal T$, the algebraic sum of branch powers equals $p_{\rm term}$ pointwise once branch voltages and currents are determined. For a fixed admissible topology, Kirchhoff's current and voltage laws plus the constitutive relations $v_r=R_r i_r$, $v_\ell=L_\ell i_\ell'$, and $i_c=C_c v_c'$ uniquely determine internal branch variables $\{i_r,i_\ell,v_c\}$ as linear functions of port observables and identified parameters. Consequently, each summand of \eqref{eq:branch_decomposition} is a deterministic function of the input data, and the right-hand side is a unique decomposition of $p_{\rm term}$ for that topology.
\end{IEEEproof}

\begin{remark}
\label{rem:uniqueness}
Theorem~\ref{thm:localization} concerns uniqueness within a fixed topology $\mathcal T$. It does not assert uniqueness across topologies. A different admissible topology yields a different branch decomposition of the same $p_{\rm term}$. This is the subject of Theorem~\ref{thm:indeterminacy}.
\end{remark}

\subsection{Topology-Indeterminacy Theorem}

The second result establishes that several admissible topologies may produce identical terminal observables while assigning energy distinctly across branches.

\begin{theorem}[Topology Indeterminacy]
\label{thm:indeterminacy}
There exist pairs $(\mathcal T_1,\bm\theta_1)\neq(\mathcal T_2,\bm\theta_2)$ of admissible topologies and parameter vectors that yield identical terminal observables $\bm v,\bm i$ over a finite analysis window but distinct branch-level decompositions \eqref{eq:branch_decomposition}.
\end{theorem}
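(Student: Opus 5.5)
The statement is existential, so the plan is to exhibit one explicit pair of admissible topologies and check two facts. First, the pair reproduces the same terminal observables $\bm v,\bm i$. Second, Theorem~\ref{thm:localization}, applied to each member separately, returns decompositions \eqref{eq:branch_decomposition} that differ. The most economical witness is a three-wire load: a delta-connected topology $\mathcal T_1$ with branch admittances $Y_{ab},Y_{bc},Y_{ca}$, and a wye topology $\mathcal T_2$ with a floating (virtual) star point and branch impedances $Z_a,Z_b,Z_c$ obtained by the delta--wye (Kennelly) transformation. In the LTI sinusoidal regime the transformation gives identical port immittance matrices, hence identical $\bm v,\bm i$ for every terminal excitation over any window.

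The first step is to choose the delta parameters so that the transformed wye branches are again realizable as series or parallel $R$, $L$, $C$ elements with positive values; this is what makes $\mathcal T_2$ admissible. The cleanest choice is all-resistive, with $R_a=R_{ab}R_{ca}/(R_{ab}+R_{bc}+R_{ca})$ and its cyclic analogues. If storage elements are to be shown being relocated, I would instead use a single frequency. In that case any passive branch impedance $Z=R+jX$ can be realized at that frequency by a positive $R$ in series with an $L$ or a $C$, so admissibility holds at the fundamental.

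The second step is to show that the two decompositions differ. I would take the simplest discriminating instance: a purely resistive, unbalanced delta under balanced sinusoidal line voltages. Both sides satisfy $p_{\rm term}=\sum R_r i_r^2$ pointwise, because each is a Tellegen balance \eqref{eq:tellegen_balance}. The branch multisets differ, however. The delta dissipations are $v_{ab}^2/R_{ab}$ and so on, while the wye dissipations are $R_a i_a^2$ and so on, and the two sets of three functions are generally not equal even as sets. For example, with $R_{bc}\to$ large, the delta carries essentially no power in branch $bc$, whereas the corresponding wye branches all carry nonzero current. A cleaner example relocates storage. Take a delta containing an inductor in one leg and a capacitor in another, tuned so that one transformed wye leg is purely resistive. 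The stored-energy rate $W_L'$ is then assigned to a specific delta branch in $\mathcal T_1$, but to different or combined branches in $\mathcal T_2$. I would verify this with one short phasor computation.

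The main obstacle is the time-domain claim over a finite window when elements are reactive or time-varying. The Kennelly equivalence is exact for frequency-independent resistive branches at every instant, so the resistive case gives the theorem immediately in the time domain. For mixed $RLC$ branches, the transformed impedances are rational in $s$ and may not be realizable with the same element count. I would therefore base the proof on the resistive (or time-varying resistive, $R_r(t)$) witness, since the transform then holds pointwise with $R_a(t)$ given by the same formula. I would present the single-frequency reactive version only as a supplementary illustration valid in steady state.
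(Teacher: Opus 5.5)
Your proposal is correct in structure and rests on the same delta-versus-wye idea as the paper. You build the second equivalent differently, though, and that changes what is actually proved.

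\textbf{The paper's route.} The paper keeps a delta-parallel $RL$ and fits a wye-series $RL$ to the centroid virtual neutral, $Z_x=V_{xn'}/I_x$, for the given balanced $50$~Hz excitation. This is not the Kennelly wye: for the paper's data Kennelly gives $R_a\approx0.51\,\Omega$, $L_a\approx2.07$~mH, against the fitted $0.56\,\Omega$, $2.22$~mH. It reproduces the terminals only because $\sum_x I_x=0$ pins the floating star point at $n'$. The equivalence is therefore excitation-specific, single-frequency, steady-state, and verified numerically. What it buys is relocated storage: the aggregate split of $p_{\rm term}$ into $p_R(t)$ and $W_L'(t)$ differs between the two models, and only the averages coincide.

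\textbf{Your route.} Your resistive Kennelly pair has identical port conductance matrices. The equivalence is therefore exact pointwise for arbitrary waveforms, over any window, even with time-varying $R(t)$. This fits the theorem's ``finite analysis window'' better than the paper's construction. The price is that both models give $p_R=p_{\rm term}$ and $W_L'=W_C'\equiv0$, so only the distribution of Joule heat among branches differs. Kennelly maps resistive deltas to resistive wyes, so it cannot produce the ``zero versus nonzero storage'' contrast announced in the paper's proof sketch. A centroid fit of an unbalanced resistive delta does produce reactive wye legs.

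Two points need repair.

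First, your discriminating example goes exactly the wrong way. As $R_{bc}\to\infty$, you get $R_a\to0$, $R_b\to R_{ab}$, $R_c\to R_{ca}$, and the star point collapses onto terminal $a$. The wye dissipations then tend to $\{0,\,v_{ab}^2/R_{ab},\,v_{ca}^2/R_{ca}\}$, the same multiset as the delta's: leg $a$ carries current, but $R_a i_a^2\to0$. Distinctness as sets is the entire content of the existence claim, so replace ``generally not equal'' by a checked instance. The balanced delta suffices. Its legs are $R/3$ and the star point sits at the source neutral. With $v_a=\sqrt2\,V\sin\omega t$, the branch powers at $t=0$ are $\{0,\tfrac34,\tfrac34\}\cdot 6V^2/R$ for the wye versus $\{\tfrac14,1,\tfrac14\}\cdot 6V^2/R$ for the delta.

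Second, your ``main obstacle'' can be avoided without leaving the time domain. Put resistors on $ab$ and $ca$ and an inductor $L$ on $bc$. This delta transforms into a realizable wye:
leg $a$ is a conductance $1/R_{ab}+1/R_{ca}$ in parallel with $C=L/(R_{ab}R_{ca})$;
leg $b$ is $R_{ab}$ in parallel with inductance $LR_{ab}/(R_{ab}+R_{ca})$;
leg $c$ is $R_{ca}$ in parallel with inductance $LR_{ca}/(R_{ab}+R_{ca})$.
The admittance matrices agree as rational functions of $s$, so responses from rest coincide for every excitation. The wye assigns a nonzero $W_C'$ although the delta contains no capacitor, which gives you exactness and storage relocation at once.
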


\begin{IEEEproof}
A constructive proof is given. Consider an unbalanced three-wire load whose terminal data $(v_{ab},v_{bc},i_a,i_b)$ coincide with the response of (i) a delta-parallel equivalent with branch admittances $\{G_{xy},\Gamma_{xy},C_{xy}\}$, and (ii) a wye-series equivalent referred to a virtual neutral, with branch impedances $\{R_x,L_x,S_x\}$. Both equivalents satisfy Kirchhoff laws and the corresponding constitutive relations. They differ in which internal currents and voltages they declare physical, hence in which branches they assign $p_R$, $W_L'$, and $W_C'$. Section~\ref{sec:paradoxes} provides a numerical instance of this construction in which the delta equivalent assigns zero storage rates whereas the wye-virtual equivalent assigns nonzero $W_L'$ to symmetrized branches.
\end{IEEEproof}

\begin{corollary}
\label{cor:topology}
Branch-level localization is a property of the model, not solely of the data. The selection of a topology is a modeling decision that must accompany any claim of physical interpretation.
\end{corollary}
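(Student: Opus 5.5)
The corollary follows directly from Theorem~\ref{thm:indeterminacy} together with Theorem~\ref{thm:localization}. I will argue by exhibiting a counterexample to the contrary claim that localization is a function of the terminal data alone. Suppose, for contradiction, that there were a map $\Phi$ sending terminal observables $(\bm v,\bm i)$ over the analysis window to a branch-level decomposition of $p_{\rm term}$, and that this map did not depend on the chosen topology. Theorem~\ref{thm:indeterminacy} supplies two pairs $(\mathcal T_1,\bm\theta_1)\neq(\mathcal T_2,\bm\theta_2)$ with identical $(\bm v,\bm i)$. By Theorem~\ref{thm:localization}, each pair determines a unique decomposition \eqref{eq:branch_decomposition}. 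Call these $D_1$ and $D_2$; Theorem~\ref{thm:indeterminacy} guarantees $D_1\neq D_2$. Because the data are identical, $\Phi(\bm v,\bm i)$ would have to equal both $D_1$ and $D_2$, which is a contradiction. Hence localization depends on $(\mathcal T,\bm\theta)$ and not solely on the data; this is the first sentence of the corollary.

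For the second sentence, I will note that admissibility is defined only through reproduction of the terminal observables within the residual $\rho_E\le\tau$. The data therefore cannot distinguish $\mathcal T_1$ from $\mathcal T_2$: both are equally admissible. Any statement of the form ``branch $k$ dissipates or stores energy at rate $f(t)$'' is meaningful only once a topology has been fixed. This is exactly the statement that topology selection is a modeling decision that must accompany the claim. I will illustrate the point with the delta-parallel versus wye-virtual pair from the proof of Theorem~\ref{thm:indeterminacy}: there, one equivalent reports zero storage rates while the other reports nonzero $W_L'$, even though $p_{\rm term}$ is the same.

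The only delicate step is making precise what ``distinct decompositions'' means when the two topologies have different branch sets. A naive termwise comparison is ill-posed because there is no common index set. I will handle this by comparing aggregate category totals $(p_R,W_L',W_C')$, which are defined for any topology. This reduces the comparison to a topology-independent triple of time functions, and the delta-versus-wye instance already differs in $W_L'$. So even this coarser invariant separates the two models, and the argument goes through without any branch matching.
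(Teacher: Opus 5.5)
Your first two paragraphs are the paper's argument. The corollary is read directly off Theorem~\ref{thm:indeterminacy}, with Theorem~\ref{thm:localization} making each admissible pair's decomposition well defined. The paper gives no further proof beyond noting that the two equivalents differ in which branches they assign $p_R$, $W_L'$, and $W_C'$ to. That part is fine.

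The third paragraph is where the proposal slips, in three ways.

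\emph{Stronger claim than the theorem gives.} Once you redefine ``distinct'' as ``distinct aggregate triples $(p_R,W_L',W_C')$'', your sentence ``Theorem~\ref{thm:indeterminacy} guarantees $D_1\neq D_2$'' no longer applies. The theorem only promises distinct branch-level decompositions, so you must verify the stronger property yourself.

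\emph{Wrong evidence.} You cite the zero-storage remark in the proof sketch of Theorem~\ref{thm:indeterminacy}, but that remark does not describe the paper's actual construction. In Section~\ref{subsec:topology_indeterminacy} the delta has $L_{xy}=10$--$12$~mH and exchanges $\pm50$~kW per edge. Your claim does survive there, but only by computation. In rms phasors, the double-frequency part of total $p_R$ is $\mathrm{Re}\bigl(\sum_{xy}G_{xy}\hat V_{xy}^2e^{j2\omega t}\bigr)$ for the delta and $\mathrm{Re}\bigl(\sum_x R_x\hat I_x^2e^{j2\omega t}\bigr)$ for the wye. With the stated data these complex amplitudes differ; the double-frequency amplitudes of total $W_L'$ come out near $7$~kW for the delta and $16$~kW for the wye.

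\emph{Not a separating invariant in general.} Consider a balanced delta $RL$ load and its wye equivalent in sinusoidal steady state. Both give total $p_R\equiv P$ and $\sum W_L'\equiv 0$, while their per-branch profiles differ. This is exactly the aggregation-hides-exchange effect the paper stresses in Sections~\ref{subsec:balanced_RL} and~\ref{subsec:fluctuating_phase}. So your reduction works only instance by instance.

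The detour is also unnecessary. A localization is by definition an assignment indexed by the branch set of $\mathcal T$. The lack of a common index set is therefore not an obstacle to the corollary but its content: Theorem~\ref{thm:indeterminacy} shows that the same terminal data admit both index sets, and only a modeling choice selects one.
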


\begin{corollary}[Series-parallel duality]
\label{cor:series_parallel}
Let $\mathcal B_p$ be a parallel branch with parameters $(G,\Gamma,C)$ and constant scalars, and let $\mathcal B_s$ be a series branch with parameters $(R(t),L(t),S(t))$. If both reproduce the same branch terminal pair $(v,i)$ over the analysis window, then the branch energy decomposition \eqref{eq:branch_joule}--\eqref{eq:branch_capacitor_rate} returns identical $p_R(t)$, $W_L'(t)$, and $W_C'(t)$ in both representations, even though the parameter values are not equal in general, where $\breve{v}$ and $\breve{\imath}$ denote the time primitives of $v$ and $i$:
\begin{equation}
R_s(t)i^2=G_p v^2,
\label{eq:dual_joule}
\end{equation}
\begin{equation}
L_s(t)\,ii'=\Gamma_p\,v\breve v,
\label{eq:dual_inductor}
\end{equation}
\begin{equation}
S_s(t)\,i\breve\imath=C_p\,vv'.
\label{eq:dual_capacitor}
\end{equation}
\end{corollary}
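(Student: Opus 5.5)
The plan is to compare the two representations through the scalar identity \eqref{eq:tellegen_balance}, specialised to one branch terminal pair, and then to separate that identity into its three physical classes. For the parallel branch $\mathcal B_p$ the element currents are $i_G=Gv$, $i_\Gamma=\Gamma\breve v$ and $i_C=Cv'$. With $(v,i)$ fixed, Theorem~\ref{thm:localization} makes $G v^2$, $\Gamma v\breve v$ and $C vv'$ unique functions of the terminal data. For the series branch $\mathcal B_s$ the element voltages are $v_R=R_s i$, $v_L=L_s i'$ and $v_S=S_s\breve\imath$, with a common current $i$. Theorem~\ref{thm:localization} applies to each realization separately, so both sides sum to the same $vi$:
\begin{equation*}
R_s i^2+L_s ii'+S_s i\breve\imath \;=\; vi \;=\; G_p v^2+\Gamma_p v\breve v+C_p vv'.
\end{equation*}
This identity alone does not give \eqref{eq:dual_joule}--\eqref{eq:dual_capacitor}. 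The task is to show that the matching holds term by term for \emph{any} admissible series branch reproducing $(v,i)$, not only for one that is built to match.

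The first step is to use the defining property of each class. A Joule term is pointwise nonnegative and never returns energy. A magnetic or electric term is the exact derivative of a state function ($\tfrac12 L i^2$ or $\tfrac12 C v^2$ in the constant case) that is tied to a flux-like or charge-like variable. I will read admissibility of $\mathcal B_s$ as requiring that its element energies are the energy functions of the same physical processes as those of $\mathcal B_p$. In other words, the magnetic energy of the branch is a functional of the branch flux linkage $\breve v$, and the electric energy is a functional of the branch charge $\breve\imath$. Under this reading, the series magnetic energy must equal $\tfrac12\Gamma_p\breve v^{\,2}$, which is the parallel magnetic energy expressed in flux. The series electric energy must likewise equal $\tfrac12 C_p v^2$ when expressed through the charge relation $\breve\imath$ of the branch. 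Differentiating these two energies gives \eqref{eq:dual_inductor} and \eqref{eq:dual_capacitor}. Subtracting both from the common total $vi$ then forces \eqref{eq:dual_joule}.

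The second step is to check that this reading is consistent with the time-varying series parameters. Taking $W_L=\tfrac12 L_s(t)i^2$ and $v_L=L_s i'$ leaves a mismatch $\tfrac12 L_s' i^2$ between $W_L'$ and $i v_L$. I would absorb this term into the definition of the series parameter trajectory through the paper's convention \eqref{eq:branch_inductor_rate}, which takes $W_L'=i v_L$, and likewise for $S_s$. Once this is done, \eqref{eq:dual_inductor} becomes an identity between the element powers $i v_L$ and $v i_\Gamma$. Both of these are determined by $(v,i)$ through the flux $\breve v$.

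The main obstacle is the uniqueness claim itself. The series realization has three unknown functions $R_s,L_s,S_s$ but only one scalar constraint $v=R_s i+L_s i'+S_s\breve\imath$. Without an extra principle, energy can therefore be shuffled between the three classes. For this reason I would first try to formalise admissibility as a class-preservation condition: each series element must carry the same dissipative, magnetic or electric process as its parallel counterpart. I would then prove uniqueness relative to that condition. If the condition cannot be derived from the earlier assumptions, the corollary has to be stated under it explicitly, in line with Corollary~\ref{cor:topology}, which says that the localization is a property of the model and not of the data alone.
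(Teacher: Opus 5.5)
Your diagnosis of the obstacle is right, and sharper than the paper's one-line proof. That proof asserts that the branch contributions are ``functions of $(v,i)$ alone'', which fails for an arbitrary series trajectory. Taking $R_s=v/i$ and $L_s=S_s=0$ reproduces $(v,i)$, yet it gives $p_R^{(s)}=vi\neq G_pv^2$ whenever $\Gamma_p\breve v+C_pv'\not\equiv0$. So the universal reading cannot be proved. Your repair does not close the gap, though. The ``class-preservation'' condition stipulates that the series magnetic and electric energies equal $\tfrac12\Gamma_p\breve v^{\,2}$ and $\tfrac12C_pv^2$. That is two of the three conclusions written in as hypotheses, and \eqref{eq:dual_joule} then follows by subtraction, so nothing is derived. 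The physical motivation is also wrong. In a series branch the inductor's flux linkage is $\int v_L\,dt=\int L_s i'\,dt$, not the branch flux $\breve v$. The series capacitor's natural state is the charge $\breve\imath$, with energy like $\tfrac12 S_s\breve\imath^{\,2}$. Neither is tied to the state of the parallel elements.

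The missing idea is to read the corollary constructively, as the paper does right afterwards and in the proof of Theorem~\ref{thm:duality}. Relations \eqref{eq:dual_joule}--\eqref{eq:dual_capacitor} \emph{define} the series trajectories:
\[
R_s=\frac{G_pv^2}{i^2},\qquad L_s=\frac{\Gamma_pv\breve v}{ii'},\qquad S_s=\frac{C_pvv'}{i\breve\imath}.
\]
The one thing to verify is that this series branch really reproduces $(v,i)$. That follows at once from the parallel KCL $i=G_pv+\Gamma_p\breve v+C_pv'$:
\[
R_si+L_si'+S_s\breve\imath=\frac{v\,(G_pv+\Gamma_p\breve v+C_pv')}{i}=v .
\]
Take the element-power convention $W_L'=iv_L$ and $W_C'=iv_S$. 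Your remark about the extra $\tfrac12L_s'i^2$ term is correct and is exactly why this convention is needed. With it, the three identities hold by construction. What is provable is an existence statement for these particular trajectories, valid away from the zeros of $i$, $i'$, and $\breve\imath$ where the ratios are undefined. It is not uniqueness across all series realizations. You had the correct total identity but never wrote down the trajectories or checked the series KVL.
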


\begin{IEEEproof}
For a single branch, the constitutive relations of the two representations and Kirchhoff's laws yield two parameterizations of the same branch terminal pair. The branch contributions in \eqref{eq:branch_joule}--\eqref{eq:branch_capacitor_rate} are functions of $(v,i)$ alone once expressed by either side of those equations; substituting the alternative parameterizations gives \eqref{eq:dual_joule}--\eqref{eq:dual_capacitor}.
\end{IEEEproof}

\newblock{Corollary~\ref{cor:series_parallel} is operationally relevant: when a parallel identification has been performed, the equivalent series circuit can be obtained by setting $R_s=v^2 G_p/i^2$, $L_s=v\breve v\,\Gamma_p/(i i')$, and $S_s=v v'\,C_p/(i\breve\imath)$. The series parameters are time-varying even when the parallel parameters are constants; the branch energy decomposition is invariant.}

\subsection{Generalized Energetic Duality}
\label{subsec:duality}

\newblock{Classical electrical duality is a structural property of LTI circuits with single-frequency sinusoidal excitation. It pairs the Kirchhoff voltage law of a series subcircuit with the Kirchhoff current law of its parallel dual under the element-level swap $R\leftrightarrow G=1/R$, $L\leftrightarrow C$, voltage source $\leftrightarrow$ current source, and produces identical frequency-domain responses, identical instantaneous active and reactive powers, and identical stored energy under the appropriate parameter mapping. Outside its operating regime, the classical duality breaks in two ways: nonsinusoidal excitation forbids a single set of constant dual parameters that hold simultaneously at every harmonic, and nonlinear or time-varying loads invalidate constant-parameter equivalence over a finite analysis window \cite{willems2010budeanu,jeltsema2015fluctuating}.}

\newblock{Both failures vanish at the branch energy level. The branch energy decomposition $p_R+W_L'+W_C'$ is invariant under a richer duality that absorbs both extensions, but paying the price that parameters become time-varying.}

\begin{theorem}[Generalized Energetic Duality]
\label{thm:duality}
Let $(v,i)$ be a branch terminal pair over a finite analysis window. For every admissible topology family $\mathcal T\in\{\mathrm{series},\mathrm{parallel},\mathrm{mixed}\}$, there exists a parameter trajectory $\bm\theta_{\mathcal T}(t)$ such that the branch energy decomposition $p_R(t)+W_L'(t)+W_C'(t)$ computed with $(v,i,\bm\theta_{\mathcal T})$ is identical for every $\mathcal T$. The trajectories degenerate to constants if and only if $(v,i)$ are sinusoidal at a single frequency and the underlying load is LTI; otherwise, at least one parameter is genuinely time-varying in at least one topology.
\end{theorem}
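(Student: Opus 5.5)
The proof will be constructive, in the same spirit as Corollary~\ref{cor:series_parallel}: fix one reference decomposition and transport it to every other topology family by solving pointwise for the parameters.

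\emph{Step 1: reference decomposition.} Take the parallel representation as the reference, or more generally any representation identified from $(v,i)$. It yields three scalar signals $p_R(t)$, $W_L'(t)$ and $W_C'(t)$ whose sum is $vi$, by \eqref{eq:tellegen_balance} applied to a single branch.

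\emph{Step 2: series family.} Define the series parameters pointwise by
\[
R_s(t)=\frac{p_R}{i^2},\qquad L_s(t)=\frac{W_L'}{i\,i'},\qquad S_s(t)=\frac{W_C'}{i\,\breve\imath},
\]
exactly as in the remark after Corollary~\ref{cor:series_parallel}. By construction each branch contribution \eqref{eq:branch_joule}--\eqref{eq:branch_capacitor_rate} equals the reference one.

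\emph{Step 3: mixed families.} Split each reference term between the series and parallel sub-branches by an arbitrary partition of unity $\lambda(t)\in[0,1]$. Then solve each sub-branch parameter pointwise in the same way. This gives existence for every $\mathcal T$ and invariance of $p_R+W_L'+W_C'$.

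\emph{Step 4: the "if'' direction.} Assume an LTI load under single-frequency sinusoidal excitation, $v=V\cos\omega t$ and $i=I\cos(\omega t-\varphi)$. The classical impedance--admittance and $L\leftrightarrow C$ dual mappings then give constant parameters. I would check directly that these constant parameters reproduce the branch terms identically, not just in average. This is where the choice of reference matters, so I would choose the reference decomposition to be the classical one.

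\emph{Step 5: the "only if'' direction, which I expect to be the main obstacle.} Suppose every family admits constant parameters. In particular, a series $RLS$ branch with constants and a parallel $G\Gamma C$ branch with constants must both reproduce $(v,i)$.

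The series form forces $v=R_si+L_si'+S_s\breve\imath$ and the parallel form forces $i=G_pv+\Gamma_p\breve v+C_pv'$. So the load is LTI on the window. Beyond that, the two transfer functions $Z_s(s)=R_s+sL_s+S_s/s$ and $1/Y_p(s)$, with $Y_p(s)=G_p+\Gamma_p/s+sC_p$, must agree at every frequency present in the excitation.

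With generic constant parameters, the difference $Z_s(s)Y_p(s)-1$ is a rational function whose numerator has low degree. It can therefore vanish only at a finite set of frequencies, at most one conjugate pair $\pm j\omega$ once the decompositions are also required to match term by term. Term-by-term matching is the condition $R_si^2=G_pv^2$ pointwise. For multi-harmonic signals this forces $v/i$ to be constant, which excludes storage unless the spectrum is a single line.

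The delicate points are the degenerate cases: purely resistive branches, and zero crossings where $i\,i'=0$, which make the quotients in Step 2 singular. I would handle these by restricting "genuinely time-varying'' to mean non-constant on a set of positive measure. Where singular, I would define the parameters by continuous extension, or treat them as distributions. I would state explicitly that the purely resistive case is excluded from "otherwise.''
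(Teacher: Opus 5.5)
\textbf{Step 4 fails: the ``if'' half cannot be proved.} Your Steps 1--3 reproduce the paper's own construction: the pointwise map of Corollary~\ref{cor:series_parallel} from a parallel reference, and a branch-by-branch split for mixed families. So the existence part matches the paper, modulo the singularity issue below. The gap is in Step 4. The check you plan there cannot succeed, and choosing a different reference does not help. Once both families must reproduce $(v,i)$ with constants, their resistive parameters are fixed, and $R_s i^2=G_p v^2$ pointwise with constants $R_s,G_p>0$ forces $|v|/|i|$ to be constant. That is exactly the criterion of your Step 5, and it never uses the number of harmonics.

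Concretely, take a series branch with $R=\omega L=1$ and $i=\sin\omega t$, so $v=\sin\omega t+\cos\omega t$. Constant reproduction forces $R_s=\mathrm{Re}\,Z=1$ and $G_p=\mathrm{Re}\,Y=1/2$. Yet $R_s i^2=\sin^2\omega t$, whereas $G_p v^2=\sin^2(\omega t+\pi/4)$.

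The paper's balanced $RL$ case shows the same thing. The IEEE Std.~1459 active term in \eqref{eq:1459_balanced_RL} is exactly $G_p v_a^2$ with $G_p=P/V^2$, i.e.\ the Joule term of the constant parallel equivalent. It is shifted by $2\theta$ from the series Joule term in \eqref{eq:branch_balanced_RL}.

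The paper's proof does not close this either. Its substitution gives $v^2/i^2=(V/I)^2\sin^2\omega t/\sin^2(\omega t-\phi)$, which is constant only for $\phi\in\{0,\pi\}$. Likewise $v\breve v/(ii')\propto\sin 2\omega t/\sin(2\omega t-2\phi)$ is constant only when $2\phi$ is a multiple of $\pi$. So under the componentwise reading you rightly adopt, the ``if'' half is false for every lossy branch with $0<|\phi|<\pi/2$. What is provable is this: for a lossy branch, constants in both families with identical terms force $v\propto i$, whatever the spectrum.

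\textbf{Step 5: lossless single-element branches are missed.} The Joule argument is vacuous when $R_s=G_p=0$. A pure inductor ($L_s=L$, $\Gamma_p=1/L$, so $\Gamma_p v\breve v=L\,ii'$) keeps constant parameters and identical terms under any multi-harmonic excitation, and so does a pure capacitor. This contradicts your ``at most one conjugate pair'' claim. It also means that excluding only the resistive case is not enough. Single-frequency LTI is therefore neither sufficient nor necessary for constancy.

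\textbf{Steps 2--3: the proposed continuous extension does not exist.} Consider a simple zero $t_0$ of $i$ with $v(t_0)\neq 0$, which is generic for a lossy branch with storage. There $R_s=G_p v^2/i^2$ behaves like $(t-t_0)^{-2}$, which is neither continuous nor locally integrable. The trajectory should therefore be asserted only off the zero sets of $i$, $ii'$ and $i\breve\imath$.

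On that set you should also verify that the constructed series branch actually reproduces the data. This follows by summing the three matched terms: $i\,(R_s i+L_s i'+S_s\breve\imath)=vi$, hence $v=R_s i+L_s i'+S_s\breve\imath$ wherever $i\neq 0$.
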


\begin{IEEEproof}
The constructive map between parallel and series representations is provided by Corollary~\ref{cor:series_parallel}: setting $R_s(t)=v^2 G_p/i^2$, $L_s(t)=v\breve v\,\Gamma_p/(ii')$, and $S_s(t)=vv'\,C_p/(i\breve\imath)$ enforces $p_R^{(s)}=p_R^{(p)}$, $W_L^{\prime (s)}=W_L^{\prime (p)}$, and $W_C^{\prime (s)}=W_C^{\prime (p)}$ pointwise via \eqref{eq:dual_joule}--\eqref{eq:dual_capacitor}. Mixed topologies are constructed by partitioning the energy decomposition between series and parallel branches and applying the mapping branch by branch. The constancy claim follows from substituting $v=V\sin\omega t$ and $i=I\sin(\omega t-\phi)$ into the ratios $v^2/i^2$, $v\breve v\,/(i i')$, and $v v'\,/(i\breve\imath)$: all three reduce to time-invariant constants only at a single frequency with constant amplitude. Multi-frequency excitation breaks this collapse and the ratios become explicit functions of $t$.
\end{IEEEproof}

\newblock{Theorem~\ref{thm:duality} organizes classical duality results as restrictions of a single time-domain principle:}

\begin{itemize}
\item \newblock{\textit{DC steady state}: only resistive elements; the duality reduces to $R\leftrightarrow G=1/R$, with constant parameters.}
\item \newblock{\textit{LTI sinusoidal at a single frequency}: classical Norton-Thevenin and the element-level swap $L\leftrightarrow C$ at the operating frequency, with constant parameters that absorb both the magnitude and phase of the dual relationship.}
\item \newblock{\textit{Nonsinusoidal LTI}: classical element-level duality recovers the equivalence at each harmonic separately; an aggregated time-domain equivalence requires time-varying parameters in at least one topology \vtwo{of fixed order, or equivalently constant elements arranged in a higher-order Foster or Cauer network~\cite{foster1924reactance,cauer1926synthesis,guillemin1957synthesis}}.}
\item \newblock{\textit{Nonlinear or genuinely time-varying loads}: constant-parameter dual representations fail; energetic duality is recovered only with time-varying parameter trajectories in every topology.}
\end{itemize}

\vtwo{The Foster (series and parallel) and Cauer (ladder) canonical forms make this freedom explicit: a prescribed LTI frequency-dependent one-port is realized by constant $L$ and $C$ elements arranged in different higher-order networks, each reproducing the same terminal behavior while storing energy in distinct internal elements. They are constant-element instances of the topology indeterminacy of Theorem~\ref{thm:indeterminacy}, complementary to the low-order time-varying branch used here: choosing a higher-order constant-element realization or a low-order time-varying one is again a modeling decision that fixes where the stored energy resides.}

\newblock{The three-wire topology indeterminacy of Section~\ref{subsec:topology_indeterminacy} is the multi-port instance of the same principle. The terminal data $(v_{ab},v_{bc},i_a,i_b)$ is reproduced by a delta-parallel parameterization and by a wye-series-virtual parameterization; the parameter mapping between them is not the classical element-level swap but a topology transformation across families. Theorem~\ref{thm:duality} guarantees that the energetic content of the load is invariant under this transformation, while Corollary~\ref{cor:topology} reminds that the choice of family is a modeling decision that must be reported. The two theorems are complementary: Theorem~\ref{thm:indeterminacy} states that distinct localizations exist; Theorem~\ref{thm:duality} states that, with a constructive parameter map, those localizations can be brought into one-to-one correspondence at the energy level. A numerical validation of the duality on the linear parallel $RLC$ of phase $b$ of the four-wire nonlinear case is reported later in Section~\ref{subsec:fourwire_nonlinear}.}

\subsection{Pedagogical illustration: Czarnecki C\,$\leftrightarrow$\,L paradox}
\label{subsec:czarnecki_cl}

A minimal single-phase example, due to Czarnecki~\cite{czarnecki1997budeanu}, fixes the meaning of Theorem~\ref{thm:indeterminacy}; see Fig. \ref{fig:fig1}(b). A parallel branch with $C=49/32$~F and $L=32/57$~H is excited by $v=\sqrt2\,(100\sin\omega t+25\sin 3\omega t)$ at $\omega=1$~rad/s. Czarnecki shows a different two-element circuit that produces an identical terminal current at the fundamental component. Budeanu's reactive power is $Q_B=0$.

\begin{figure}[!t]
\centering
\includegraphics[width=\columnwidth]{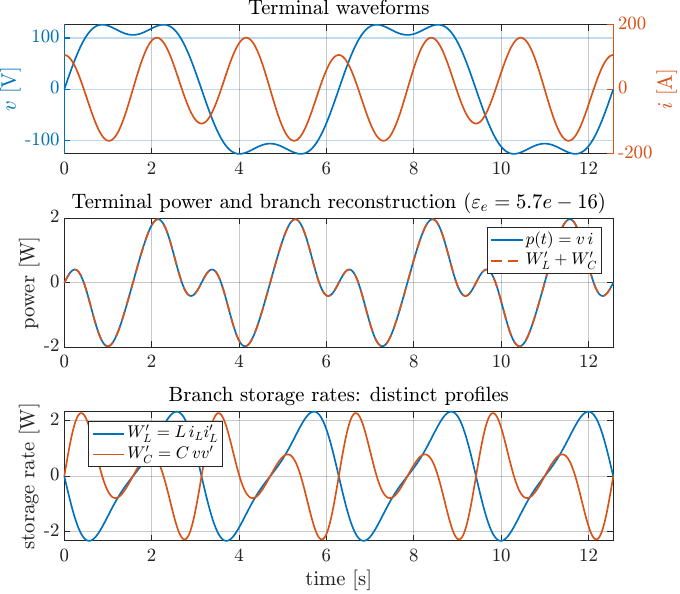}
\caption{\newblock{Czarnecki C\,$\leftrightarrow$\,L paradox as illustration of Theorem~\ref{thm:indeterminacy}. Top: terminal voltage and current (normalized, $\omega=1$ rad/s). Middle: instantaneous terminal power and the branch decomposition $W_L'+W_C'$, which reconstructs $p(t)$ exactly. Bottom: $W_L'$ and $W_C'$ are distinct functions of time, even though Budeanu's $Q_B=0$ and the fundamental components alone leave $L$ and $C$ indistinguishable.}}
\label{fig:p2_czarnecki_CL}
\end{figure}

The branch decomposition gives $W_L'=L\,i_L i_L'$ and $W_C'=C\,v v'$ with $W_L'+W_C'=p_{\rm term}$, since no resistor is present. As Fig.~\ref{fig:p2_czarnecki_CL} shows, $W_L'$ and $W_C'$ are distinct functions of time; their sum reconstructs the measured $p(t)$. The terminal data alone do not distinguish circuits whose fundamental currents agree; the energy localization does. The same construction generalizes to three-phase loads, where the indeterminacy is structurally richer because of the three-wire constraint.

\begin{figure*}[]
	\centering
	\includegraphics[width=\textwidth]{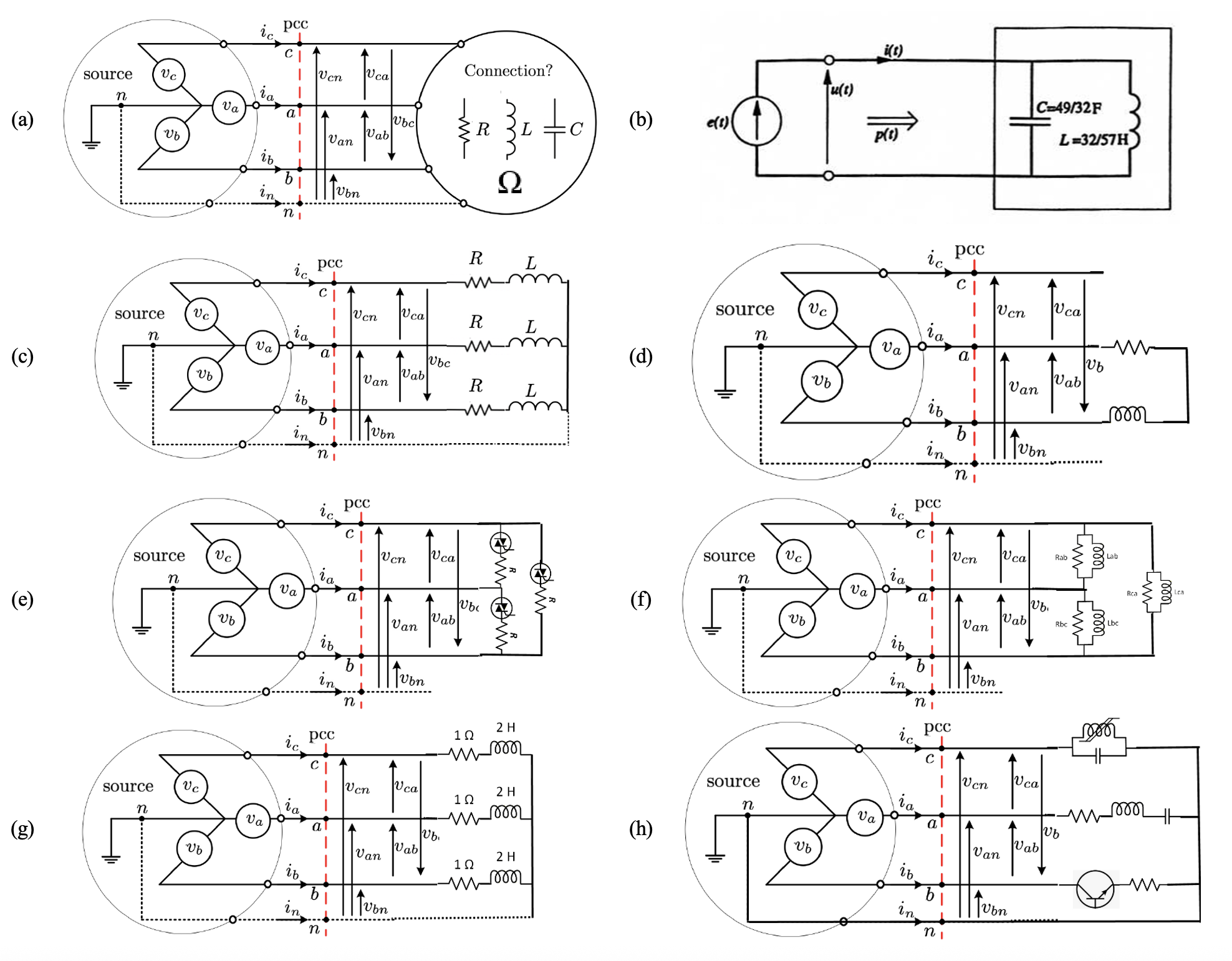}
	\caption{Figura 1}
	\label{fig:fig1}
\end{figure*}

\section{From Terminal Data to Branch Localization}
\label{sec:localization}

The localization framework requires an admissible three-phase lumped equivalent. Two families are used in the rest of the paper. The delta-parallel equivalent represents each line-to-line branch with parallel $G_{xy}$, $\Gamma_{xy}=1/L_{xy}$, and $C_{xy}$ elements,
\begin{equation}
i_{xy}=G_{xy}v_{xy}+\Gamma_{xy}\breve v_{xy}+C_{xy}v_{xy}',
\label{eq:delta_parallel}
\end{equation}
with line currents $i_a=i_{ab}-i_{ca}$, $i_b=i_{bc}-i_{ab}$, and $i_c=i_{ca}-i_{bc}$. The wye-series equivalent, referred to a real or virtual neutral $n$, represents each phase as series $R_x$, $L_x$, and $S_x=1/C_x$:
\begin{equation}
v_{xn}=R_x i_x+L_x i_x'+S_x\breve\imath_x.
\label{eq:wye_series}
\end{equation}
The virtual-neutral construction \vtwo{$v_{xn'}=\tfrac13(v_{xy}-v_{zx})$, with $(x,y,z)$ a cyclic permutation of $(a,b,c)$,} is needed in three-wire connections in which no physical neutral is measured \cite{blondel1893measurement,depenbrock1980active}.

For both families, branch parameters are recovered from terminal data by the time-domain identification methods established in \cite{montoyaTPWRD1,arrabal2025experimental} and extended to three-phase systems in \cite{montoya3phaseID}. The identification details are not reproduced here. The relevant property is that, given an excitation that is sufficiently rich in the time-domain sense, the parameters are determined uniquely up to numerical residuals, and the energy-localization terms in \eqref{eq:branch_decomposition} are then computed directly from the identified $\{R,L,C\}$ and the measured waveforms.


\section{Three-Phase Test Cases}
\label{sec:paradoxes}

This section reports six test cases. Each case is generated synthetically with known parameters in MATLAB, sampled at $20$~kHz or above over five fundamental periods at $50$~Hz. Reproducibility data, including waveforms, identified parameters, and balance residuals, are released as ancillary material.

\subsection{Balanced three-phase series $RL$ load: aggregated invariance hides per-phase exchange}
\label{subsec:balanced_RL}

A balanced three-phase $RL$ load \vtwo{(each phase a series $RL$ branch)} is supplied by a symmetric voltage triple $v_x=\sqrt2\,V\sin(\omega t-\theta_x)$ with $\theta_a=0$, $\theta_b=2\pi/3$, $\theta_c=4\pi/3$. With $|Z|=\sqrt{R^2+(\omega L)^2}$, $\theta=\arctan(\omega L/R)$, and $I=V/|Z|$, the line currents are $i_x=\sqrt2\,I\sin(\omega t-\theta_x-\theta)$; see Fig. \ref{fig:fig1}(c). The aggregated terminal power $p_{3\phi}=\sum_x v_x i_x$ equals $3VI\cos\theta$ and is constant in steady state. The aggregated magnetic energy
\begin{equation}
W_{L,3\phi}=\tfrac12 L\sum_x i_x^2=\tfrac32 LI^2
\label{eq:balanced_total_energy}
\end{equation}
is also constant, so $\sum_x W_{L,x}'=0$ at the three-phase boundary.

\begin{figure}[!t]
\centering
\includegraphics[width=\columnwidth]{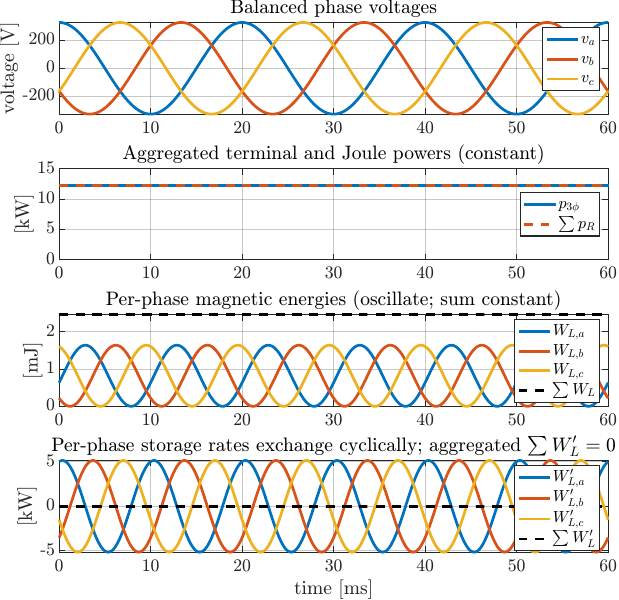}
\caption{\newblock{Balanced three-phase series $RL$ load. Top: terminal voltages. Second: aggregated terminal power and aggregated Joule dissipation, both constant in steady state. Third: per-phase magnetic energies $W_{L,x}(t)$ are time-varying; their sum is constant. Bottom: per-phase storage rates $W_{L,x}'$ exchange cyclically across phases while their sum vanishes.}}
\label{fig:balanced_RL}
\end{figure}

\newblock{The framework attaches an unambiguous physical reading. A reader who interprets the aggregated invariance as ``no reactive phenomenon present'' would be misled. Per-phase $W_{L,x}'\neq 0$, and physical magnetic energy genuinely oscillates inside each branch. The aggregated invariance is a geometric cancellation at the terminal surface, not a denial of internal exchange. Theorem~\ref{thm:localization} delivers the per-phase profile from the identified $L_x$ and the measured currents.}

\newblock{This case also exhibits a clean analytical separation between branch-level localization and the per-phase IEEE Std.~1459 instantaneous decomposition. The branch decomposition gives, for phase $a$,}
\begin{equation}
\begin{aligned}
p_{R,a}&=R i_a^2=P_{1\phi}\bigl[1-\cos(2\omega t-2\theta)\bigr],\\
W_{L,a}'&=L i_ai_a'=Q_{1\phi}\sin(2\omega t-2\theta),
\end{aligned}
\label{eq:branch_balanced_RL}
\end{equation}
with $P_{1\phi}=VI\cos\theta=R I^2$ and $Q_{1\phi}=VI\sin\theta=\omega L\,I^2$. \newblock{The IEEE Std.~1459 instantaneous active and reactive components per phase, taken from \cite{IEEE1459-2025}, are}
\begin{equation}
\begin{aligned}
p_{a,\rm act}^{\rm 1459}&=P\bigl[1-\cos 2\omega t\bigr],\\
p_{a,\rm rea}^{\rm 1459}&=-Q\sin 2\omega t,
\end{aligned}
\label{eq:1459_balanced_RL}
\end{equation}
\newblock{with $P=P_{1\phi}$ and $Q=Q_{1\phi}$ for this balanced case. Equations \eqref{eq:branch_balanced_RL} and \eqref{eq:1459_balanced_RL} have identical amplitudes but differ in phase by exactly $2\theta$; both pairs sum to the same instantaneous terminal product $v_a i_a$. Fig.~\ref{fig:balanced_RL_compare} reports the comparison for $V=120$~V, $R=1.1\,\Omega$, $L=1$~mH ($\theta\approx 16^\circ$, hence a $\sim 32^\circ$ relative phase shift between the two decompositions). The two decompositions agree on every aggregate (RMS magnitudes, time average) but disagree pointwise; only the branch decomposition associates an instant of $p_R$ with the actual instant of Joule dissipation in the branch resistor, and an instant of $W_L'$ with the actual instant of energy entering or leaving the branch inductor.}

\begin{figure}[!t]
\centering
\includegraphics[width=\columnwidth]{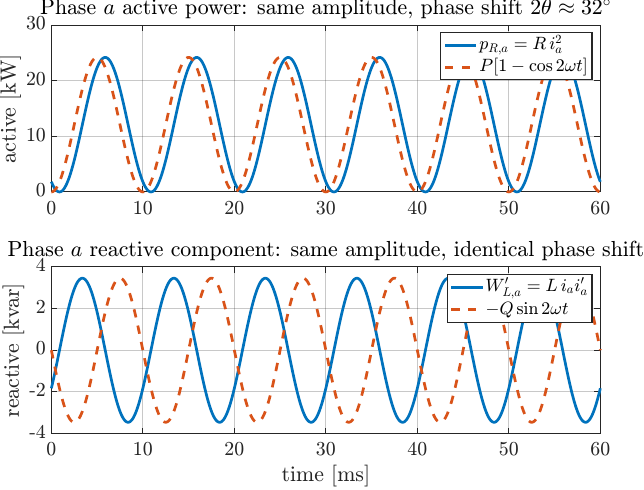}
\caption{\newblock{Balanced three-phase $RL$ load, phase $a$. Top: branch-level $p_{R,a}$ from \eqref{eq:branch_balanced_RL} versus the IEEE Std.~1459 instantaneous active per phase from \eqref{eq:1459_balanced_RL}. Bottom: branch-level $W_{L,a}'$ versus the standard's instantaneous reactive per phase. Same amplitudes, phase difference of $2\theta$ (see \cite{shepherdbook}). The two decompositions sum to identical $v_a i_a$ but localize the components differently in time.}}
\label{fig:balanced_RL_compare}
\end{figure}

\subsection{Single-branch resistive three-wire load: open phases, CPC ghost currents, and $p$--$q$ pseudo-reactive}
\label{subsec:resistive_unbalanced}

\newblock{Consider a three-wire load consisting of a single resistor $R_{ab}=0.65~\Omega$ connected between phases $a$ and $b$, with phases $b$--$c$ and $c$--$a$ open, energized by a balanced sinusoidal source. This case is a variant of the example used by de Leon and Cohen \cite{de2005discussion} to expose the ambiguity of waveform-based decompositions when only two phases conduct; see Fig. \ref{fig:fig1}(d).} The line currents satisfy $i_a=-i_b=v_{ab}/R_{ab}$ and $i_c\equiv 0$. The branch energy decomposition therefore returns
\begin{equation}
p_{\rm term}=p_{R,ab}(t)=R_{ab} i_{ab}^2,\quad
W_L'\equiv 0,\quad W_C'\equiv 0,
\label{eq:open_phase_balance}
\end{equation}
with all energy dissipated in $R_{ab}$ and zero physical storage anywhere.

\begin{figure}[!t]
\centering
\includegraphics[width=\columnwidth]{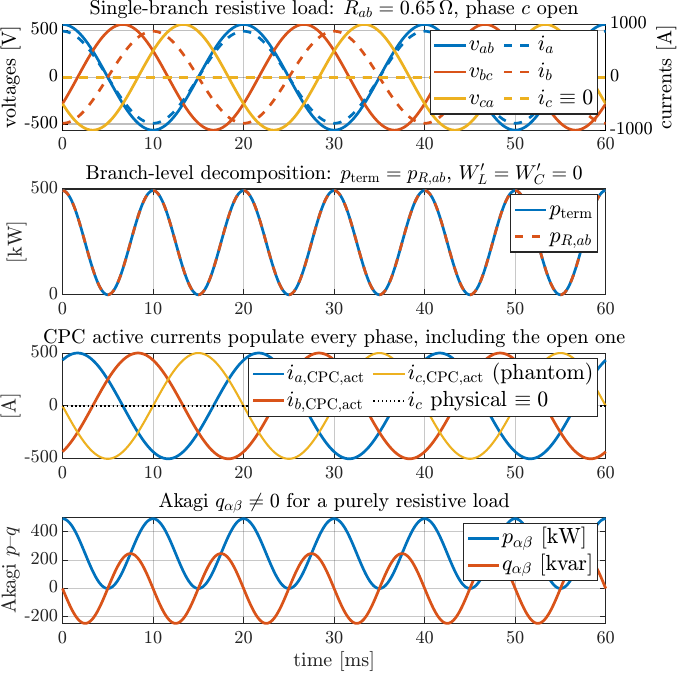}
\caption{\newblock{Single-branch resistive three-wire load: $R_{ab}=0.65~\Omega$ between phases $a$ and $b$, phase $c$ open. Top: line-to-line voltages and line currents; $i_c\equiv 0$. Second: terminal power coincides with $p_{R,ab}$ ; $W_L'=W_C'\equiv 0$. Third: Czarnecki CPC active currents \cite{czarnecki2008currents}, which are populated in all three phases including the open phase $c$. Bottom: Akagi instantaneous $p$--$q$ coordinates; $q_{\alpha\beta}\neq 0$ despite a purely resistive load.}}
\label{fig:resistive_unbalanced}
\end{figure}

\newblock{Classical decompositions do not return this answer directly (Fig.~\ref{fig:resistive_unbalanced}). Czarnecki's CPC active-current formula \cite{czarnecki2008currents} produces nonzero active currents in every phase, including phase $c$, whose physical line current is identically zero. The line currents are recovered as the algebraic sum of CPC components only after the introduction of an unbalanced current $i_u$ that cancels phase $c$, while the $a$ and $b$ active components are scaled to fit the equivalent conductance. The Akagi $p$--$q$ transformation likewise reports a nonzero $q_{\alpha\beta}$ for this purely resistive load, because the asymmetric currents in $\alpha\beta$ coordinates yield $v_\alpha i_\beta-v_\beta i_\alpha\neq 0$.}

\newblock{Czarnecki's own remark on the $p$--$q$ theory \cite{czarnecki1994comments}, ``any decomposition may lead to misinterpretation; in particular, zero can always be decomposed into nonzero terms,'' applies symmetrically to CPC in this scenario. The branch-level energy decomposition avoids the ghost components by demanding that every nonzero quantity be associated with a physical branch with identifiable parameters.}

\subsection{Three-phase nonlinear switched-resistive load: classical reactive powers as coordinate artifacts}
\label{subsec:nonlinear_switched}

A three-wire delta-connected symmetric resistor bank is supplied through three TRIACs in series with each delta branch and firing angle $\alpha$; see Fig. \ref{fig:fig1}(e). The load contains no inductor and no capacitor; all energy supplied by the source is converted to heat in the resistors. The line currents are nevertheless heavily distorted (Fig.~\ref{fig:nonlinear_switched}), with substantial low-order harmonics generated by the switching action.

Classical theories, applied to the same waveforms, return a rich landscape of nonactive components. Budeanu's distortion power $D$ is large because of the harmonic spread; Fryze's nonactive power $Q_F$ absorbs the orthogonal current to the supply; the Akagi $p$--$q$ pair has substantial $q_{\alpha\beta}(t)$; the Czarnecki CPC decomposition reports nonzero scattered and generated current components; and the CPT framework identifies a void-current term. Each of these quantities is mathematically valid and operationally useful for specific tasks, but none corresponds to a physical storage process inside the load.

\begin{figure}[!t]
\centering
\includegraphics[width=\columnwidth]{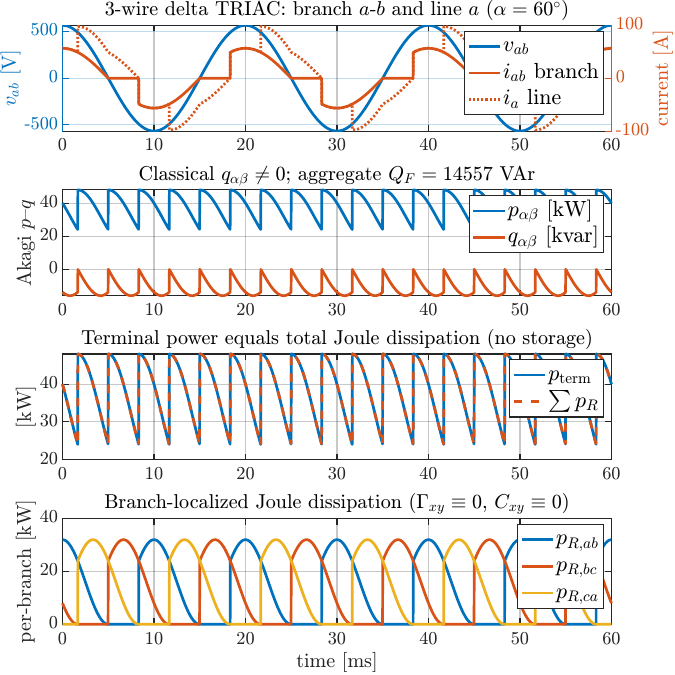}
\caption{\newblock{Three-wire delta-connected TRIAC-switched resistive load ($\alpha=60^\circ$). Top: line-to-line voltage $v_{ab}$, branch current $i_{ab}$ and line current $i_a=i_{ab}-i_{ca}$, showing strong waveform distortion. Second: Akagi $p$--$q$ instantaneous coordinates report substantial $q_{\alpha\beta}\neq 0$ and aggregate $Q_F\approx 14.6$~kVAr although the load contains no storage element. Third: terminal power $p_{\rm term}$ coincides with the sum of branch Joule dissipations, confirming $W_L'\equiv 0,W_C'\equiv 0$. Bottom: per-branch Joule dissipation $p_{R,xy}(t)$ — the entirety of $p_{\rm term}$ localizes to three switched conductances; the apparent classical nonactive components are coordinate-domain artifacts of representing the switching action by fixed-frequency waveform decompositions.}}
\label{fig:nonlinear_switched}
\end{figure}

\newblock{Identification with a parallel $RL$ topology returns $\Gamma_{xy}(t)\equiv 0$ within numerical residuals; identification with a parallel $RC$ topology returns $C_{xy}(t)\equiv 0$. The conductance $G_{xy}(t)$ is the same for both topologies and reproduces the switching action. The branch energy decomposition of Theorem~\ref{thm:localization} therefore assigns the entire $p_{\rm term}$ to Joule dissipation, with $W_L'\equiv 0$ and $W_C'\equiv 0$. The conclusion is sharp: the nonzero $D$, $Q_F$, $q_{\alpha\beta}$, scattered, generated, and void components reported by classical theories are coordinate-domain artifacts of representing a switched resistive process by fixed-frequency waveform decompositions; they describe the harmonic content of the current rather than any physical storage of energy.}

\newblock{Distinguishing operational coordinates from physical storage is precisely the purpose of branch-level localization, and switched resistive loads are the cleanest experimental setting in which the distinction can be made.}

\subsection{Three-wire delta-versus-wye topology indeterminacy}
\label{subsec:topology_indeterminacy}

This case is the constructive proof of Theorem~\ref{thm:indeterminacy}. We synthesize an unbalanced three-wire load with true parameters $(R_{ab},R_{bc},R_{ca})=(4,6,5)\,\Omega$ and $(L_{ab},L_{bc},L_{ca})=(10,12,11)$~mH in a delta-parallel $RL$ topology, energized by a balanced $400$~V$_{\rm rms}$ line-to-line source at $50$~Hz; see Fig. \ref{fig:fig1}(f). The simulation produces the four independent terminal signals $(v_{ab},v_{bc},i_a,i_b)$ shown in the top row of Fig.~\ref{fig:topology_indeterminacy}; from these data alone we identify two admissible equivalents:
\begin{enumerate}
\item the original delta-parallel $RL$ on the line-to-line edges, with branch parameters $(G,\Gamma)_{xy}=(1/R_{xy},1/L_{xy})$ recovered exactly (Fig.~\ref{fig:topology_indeterminacy}, left column);
\item a wye-series $RL$ referred to the virtual neutral $n'$ defined by \vtwo{$v_{xn'}=\tfrac13(v_{xy}-v_{zx})$ with $(x,y,z)$ cyclic in $(a,b,c)$}, fitted at the fundamental as $Z_x = V_{xn'}/I_x$ and reading $(R_a,R_b,R_c)=(0.56,0.47,0.59)\,\Omega$ and $(L_a,L_b,L_c)=(2.22,2.45,2.63)$~mH (right column).
\end{enumerate}
The wye-virtual parameters are visibly distinct from the truth, both in magnitude and in per-branch ratio.

\begin{figure}[!t]
\centering
\includegraphics[width=\columnwidth]{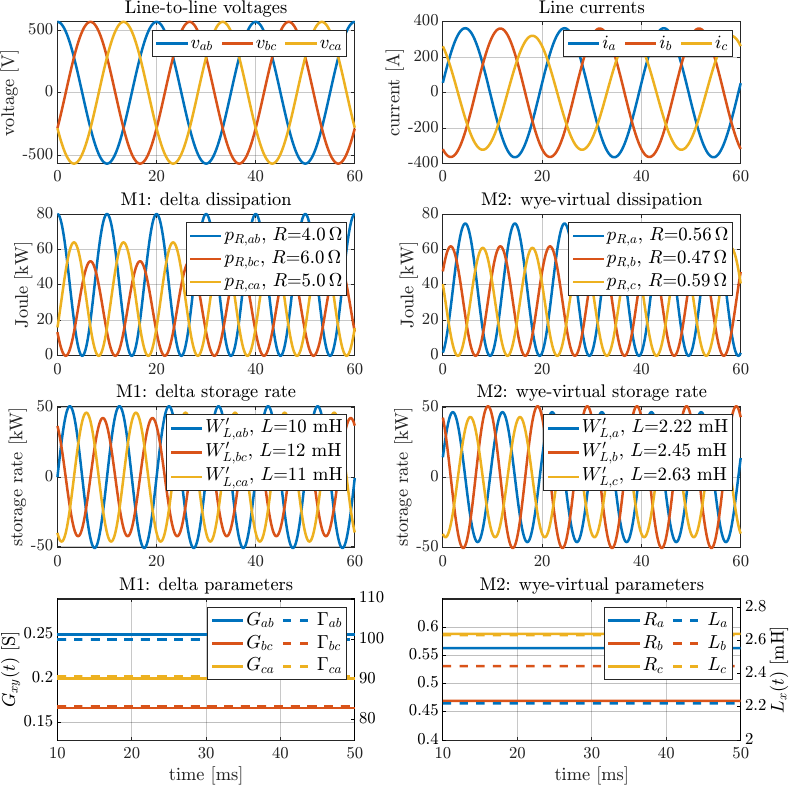}
\caption{\newblock{Three-wire topology indeterminacy. Top row: the four independent terminal signals $(v_{ab},v_{bc},v_{ca}; i_a,i_b,i_c)$ are common to both equivalents. Middle row: branch Joule dissipations $p_{R,xy}$ for the delta truth (left, with $R_{ab,bc,ca}=4,6,5\,\Omega$) and $p_{R,x}$ for the wye-virtual fit (right, with identified $R_{a,b,c}=0.56,0.47,0.59\,\Omega$); both share the same average $\bar P\approx 98.7$~kW but differ pointwise. Bottom row: branch storage rates $W_{L,xy}'$ for the delta description and $W_{L,x}'$ for the wye-virtual one, with $L_{ab,bc,ca}=10,12,11$~mH versus identified $L_{a,b,c}=2.22,2.45,2.63$~mH. Reconstruction residuals are at machine precision for both. Both localizations satisfy Theorem~\ref{thm:localization}; they differ because they postulate different internal topologies, in agreement with Theorem~\ref{thm:indeterminacy} and Corollary~\ref{cor:topology}.}}
\label{fig:topology_indeterminacy}
\end{figure}

\newblock{Despite the parameter mismatch, both equivalents reconstruct $p_{\rm term}$ to a relative residual of order $10^{-16}$ (Tellegen's identity, numerical), and they share the same average dissipation $\bar P\approx 98.7$~kW because the active power flow into the load is terminal observable. They differ in the time-resolved per-branch profiles. The middle row of Fig.~\ref{fig:topology_indeterminacy} shows that the delta branch dissipations $p_{R,xy}$ peak at $\sim 80$~kW per edge with envelopes set by the corresponding line-to-line voltages, whereas the wye-virtual $p_{R,x}$ peak at $\sim 75$~kW per phase with envelopes set by the line currents and a different inter-branch ordering. The bottom row shows the storage rates: delta branches exchange $\pm 50$~kW per edge through inductances of $10$--$12$~mH, while the wye-virtual phases reach the same $\pm 50$~kW range with substantially smaller inductances of $2.2$--$2.6$~mH; the two storage-rate envelopes differ pointwise in time.}

\newblock{The case is not a defect of the framework; it is the operational content of the topology-indeterminacy theorem. Both descriptions are admissible. The choice between them belongs to the modeler, who selects the topology that best matches the physical apparatus, the diagnostic question, or the available measurements.} A wye-virtual description is appropriate when the load is a star with accessible per-phase windings; a delta description is appropriate when the load is built from line-to-line elements. When neither is privileged, the modeler must report the choice explicitly and acknowledge that other localizations are possible.

\subsection{Three-phase fluctuating-phase load: vanishing fundamental reactive coordinates with manifest storage}
\label{subsec:fluctuating_phase}

A balanced four-wire extension of the fluctuating-phase example in \cite{jeltsema2015fluctuating} is built from the phase waveforms
\begin{equation}
\begin{aligned}
v_x&=V\sqrt2\cos(\omega_0t-\theta_x),\\
i_x&=I\sqrt2\cos[\omega_0t-\theta_x-\varphi(t)],
\end{aligned}
\label{eq:fluctuating_phase}
\end{equation}
where $\varphi(t)=-\gamma\sin\omega_1t$, $\theta_a=0$, $\theta_b=-2\pi/3$, and $\theta_c=2\pi/3$; see Fig. \ref{fig:fig1}(g). The numerical values follow the original single-phase example: $V=1$~V rms, $I=0.5$~A rms, $\omega_0=2\pi 50$~rad/s, $\omega_1=\omega_0/5$, $\gamma=\pi/2$, simulated with $f_s=200$~kHz over two modulation periods. The fundamental component of $i_x$ is in phase with $v_x$, so the classical fundamental reactive coordinate vanishes in each phase, and in the balanced three-phase lift the Akagi $p$--$q$ coordinate has zero mean ($\overline q_{\alpha\beta}\approx 0$) although it remains strongly oscillatory.
The same waveforms are identified pointwise with two admissible local models: parallel $RL$, $i_x=G_x(t)v_x+\Gamma_x(t)\breve v_x$ with $\Gamma_x=1/L_x$, and parallel $RC$, $i_x=G_x(t)v_x+C_x(t)v_x'$. The local fit at each sample uses two regressors $(v,\breve v)$ or $(v,v')$ together with the corresponding sample of $i$ and $i'$, so the recovered $G_x(t)$ exhibits a $2\omega_0$ ripple around the slow envelope $(I/V)\cos\varphi(t)$; this ripple is the signature of the instantaneous local fit and is reproduced by both topologies. Reference trajectories are obtained from the closed-form solution of the same local 2$\times$2 system; the reported finite-window estimates use the local windowed-identification procedure described in \cite{montoya3phaseID}.

\begin{figure}[!t]
\centering
\includegraphics[width=\columnwidth]{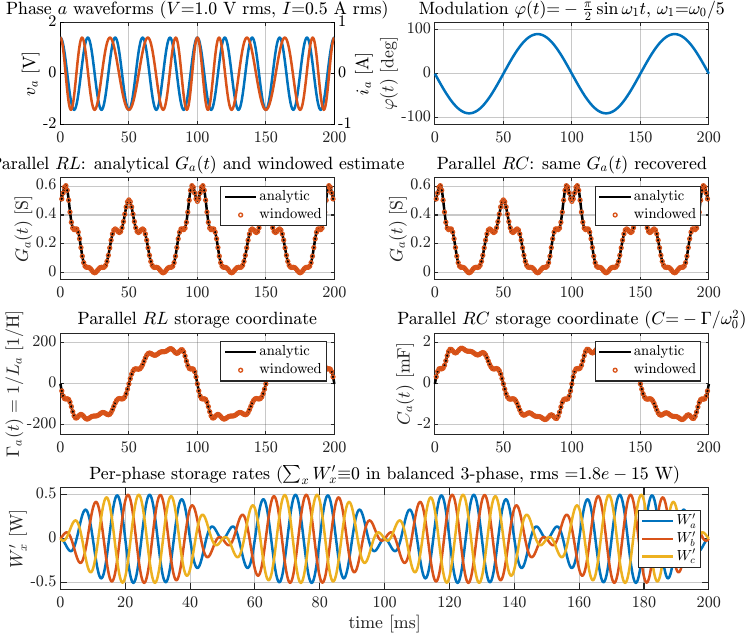}
\caption{\newblock{Three-phase fluctuating-phase load (balanced four-wire lift of the Jeltsema example) with $V=1$~V rms, $I=0.5$~A rms, $\omega_1=\omega_0/5$, $\gamma=\pi/2$. Top row: phase-$a$ voltage and current (left) and the modulation $\varphi(t)$ (right). Second row: analytical and windowed estimates of $G_a(t)$ for parallel $RL$ and parallel $RC$ descriptions; both recover the same conductance trajectory pointwise, including the $2\omega_0$ ripple superimposed on the $\cos\varphi(t)$ envelope. Third row: corresponding storage coordinates $\Gamma_a(t)=1/L_a(t)$ for $RL$ and $C_a(t)$ for $RC$, related by $C=-\Gamma/\omega_0^2$. Bottom row: per-phase storage rates $W_x'(t)$; their balanced 3-phase sum is numerically zero, so the aggregate three-phase terminal view hides the per-phase storage exchange. Plots show phase $a$ in the second and third rows; the windowed estimate uses windows of length $T_0/10$. Curves for phases $b$ and $c$ have the same shape with a $\pm 2\pi/3$ shift on the carrier.}}
\label{fig:fluctuating_phase}
\end{figure}

\newblock{Figure \ref{fig:fluctuating_phase} makes the fixed-coordinate limitation explicit. For the simulated values, $\overline p_{\alpha\beta}=0.708$~W and $\overline q_{\alpha\beta}\approx 0$, yet ${\rm rms}(q_{\alpha\beta})=1.21$~var and each phase has a storage-rate rms of $0.286$~W. The balanced aggregate $\sum_x W_x'$ is zero to numerical precision, so the three-phase terminal view hides the per-phase exchange that is manifest in the bottom panel of Fig.~\ref{fig:fluctuating_phase}. Branch localization remains valid because it operates on instantaneous waveforms and identified, possibly time-varying, parameters. The same dissipative trajectory $G_x(t)$ is recovered under both $RL$ and $RC$ storage assumptions; the storage parameter is the only variable that depends on the postulated topology, and it does so in a constrained way ($C=-\Gamma/\omega_0^2$), rather than being free to absorb arbitrary nonactive content. This cross-topology consistency is the operational signature that classical fixed-coordinate theories cannot deliver and that distinguishes a true storage process from a coordinate-domain artifact, extending the spirit of \cite{garcia2007power,jeltsema2015fluctuating} to three-phase systems.}

\subsection{Unbalanced four-wire nonlinear load: hysteretic, linear $RLC$, and switched branches in one section}
\label{subsec:fourwire_nonlinear}

\begin{figure}[!t]
\centering
\includegraphics[width=\columnwidth]{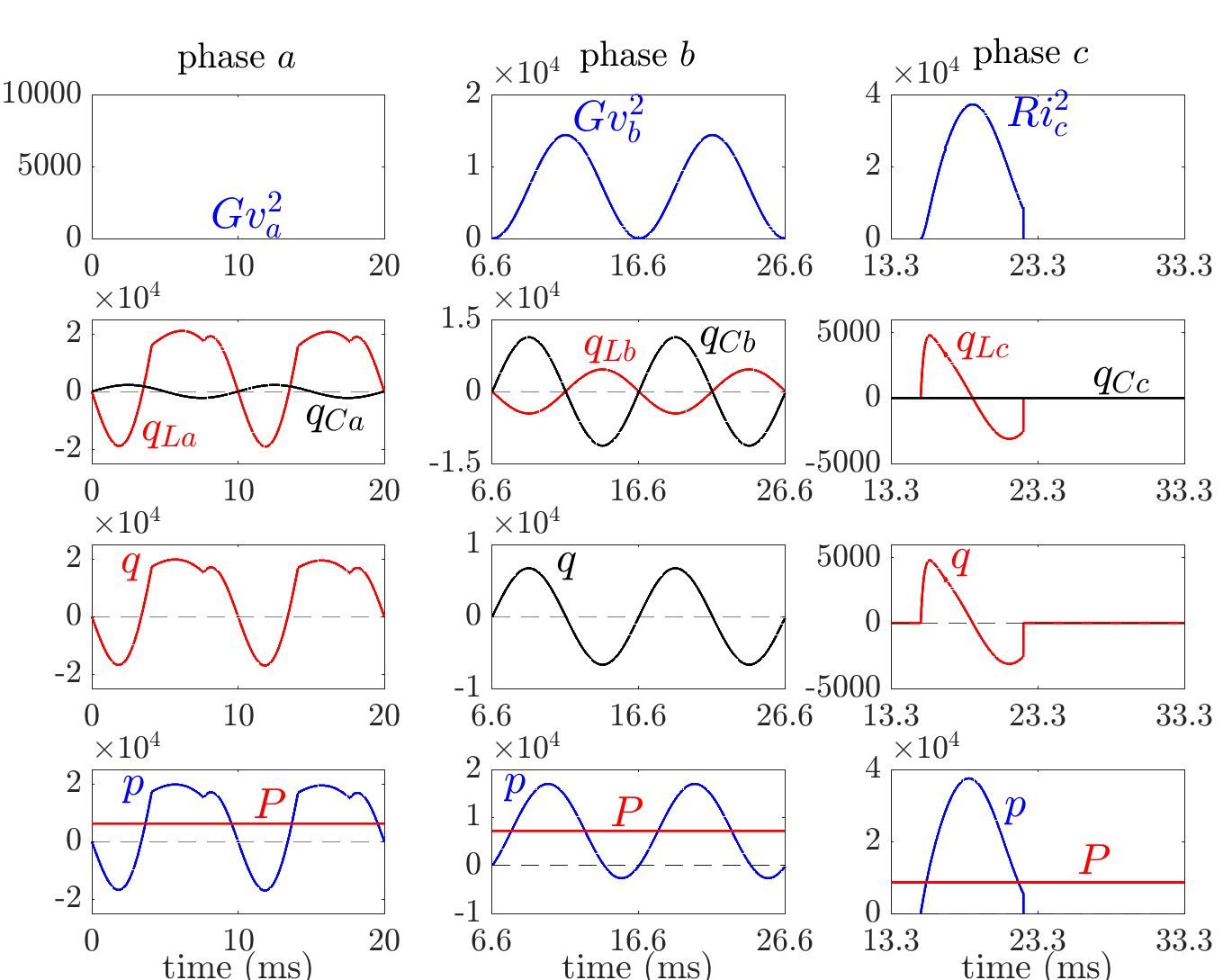}
\caption{\newblock{Powers for the three-phase circuit of the unbalanced four-wire nonlinear load. Each column corresponds to a phase. Top row: instantaneous active power consumed (energy-balance per branch). Second row: instantaneous reactive power associated with inductors ($q_L$) and capacitors ($q_C$). Third row: $q=q_L+q_C$. Bottom row: instantaneous total power $p=v\,i$ (in blue) and the average power $P$ (in red).}}
\label{fig:fourwire_nonlinear}
\end{figure}

\begin{figure}[!t]
\centering
\includegraphics[width=\columnwidth]{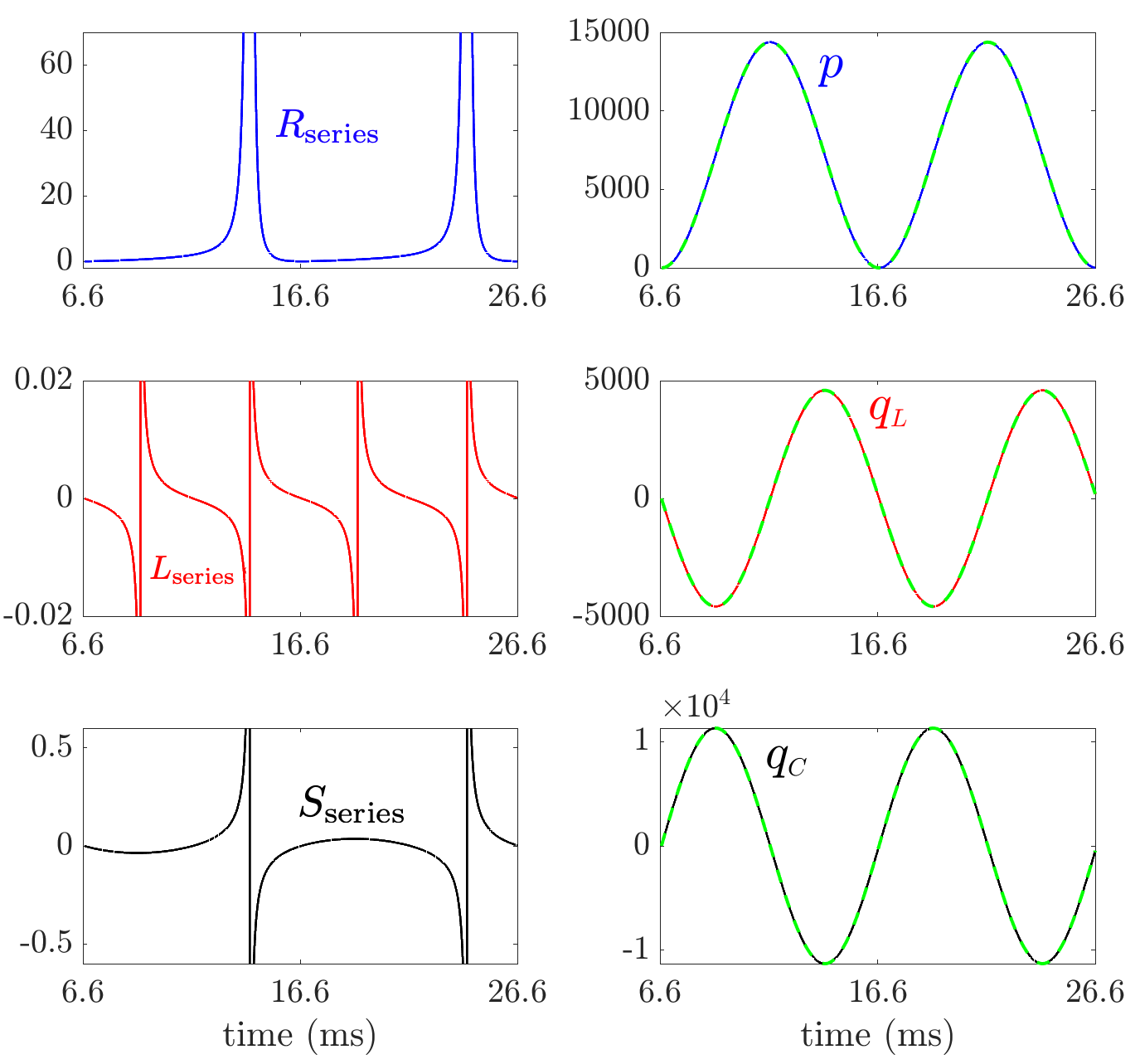}
\caption{\newblock{Energetic equivalent series circuit for the parallel circuit in phase $b$ of the four-wire nonlinear load. Left: parameters $R$, $L$, $S$. Right: active (consumed) and reactive powers.}}
\label{fig:duality}
\end{figure}

\newblock{The preceding cases share the three-wire constraint of Sections \ref{subsec:balanced_RL}--\ref{subsec:fluctuating_phase}. To document that branch-level localization is not specific to three-wire systems, this case takes a four-wire load with three structurally different per-phase branches: phase $a$ is a hysteretic inductor in parallel with a capacitor; phase $b$ is a linear parallel $RLC$ branch; phase $c$ is a purely resistive branch driven through a transistor whose conduction angle is $\alpha=\pi/3$ per half cycle; see Fig. \ref{fig:fig1}(h).} The supply is $v_x=120\sqrt2\sin(\omega t-\theta_x)+0.12\sqrt2\sin(3\omega t)$ at $f=50$~Hz, with the same harmonic content used in earlier identification work. The branch energy decomposition for the three phases is reported in Fig.~\ref{fig:fourwire_nonlinear}.

\newblock{Three observations follow. First, phase $a$ has no explicit resistor in the physical drawing, yet the identified parallel-branch admits a positive conductance $G_a$ that captures the hysteresis loss; the branch dissipation $p_{R,a}=G_a v_a^2$ accounts for the energy spent traversing the $B$--$H$ loop, with magnitude proportional to the loop area. Second, phase $b$ recovers exactly the linear $RLC$ pattern: $p_{R,b}=R_b i_{R,b}^2$, $W_{L,b}'$ and $W_{C,b}'$ have opposite signs over each subcycle, and their sum closes the per-phase balance. Third, phase $c$ has zero $p_R$ outside the conduction interval and a clean Joule trace during conduction; the framework treats the switching as a time-varying conductance and the closing residual remains at numerical precision pointwise.} The three branches obey three structurally different constitutive laws and the framework returns a single decomposition for each, with $\varepsilon\le 10^{-15}$ in every column.

\subsubsection{Series-parallel duality on phase $b$}
\label{subsubsec:duality_phaseb}

\newblock{Phase $b$ provides a clean numerical validation of the Generalized Energetic Duality (Theorem~\ref{thm:duality}). The branch is identified with the constant parallel parameters $(G_b,\Gamma_b,C_b)=(0.25\,\mathrm{S},100\,\mathrm{H}^{-1},100\,\mu\mathrm{F})$. The constructive map of Corollary~\ref{cor:series_parallel} returns the energetically equivalent series identification with time-varying parameters $R_s(t)=v^2 G_b/i^2$, $L_s(t)=v\breve v\,\Gamma_b/(i i')$, and $S_s(t)=v v'\,C_b/(i\breve\imath)$. Fig.~\ref{fig:duality} reports the result. The series parameters are genuinely time-varying, span both signs in some intervals (a mathematical signature of the energetic-dual representation rather than a physical realizability constraint), yet the branch energy components $p_R$, $W_L'$, and $W_C'$ are identical pointwise to those of the constant-parameter parallel description, with maximum difference of order $10^{-8}$~W. This figure is the multi-frequency, time-domain analogue of the classical series-parallel equivalence used in LTI sinusoidal analysis: the equivalence is preserved, but the time-invariance of the equivalent parameters is not.}

\section{Comparative Analysis}
\label{sec:comparison}

Table~\ref{tab:comparative} summarizes how IEEE Std.~1459, the Akagi instantaneous $p$--$q$ formulation, the Czarnecki CPC decomposition, the Fryze-Buchholz-Depenbrock framework, and the present branch-level localization respond to the six test cases. The numerical entries are computed from the same synthetic waveforms by independent implementations of each theory. The qualitative columns indicate whether the theory localizes physical storage; the framework of this paper is the only one that returns branch-level $p_R$, $W_L'$, and $W_C'$ profiles, precisely because it requires a topology and an identified equivalent rather than only terminal coordinates.

\begin{table*}[!t]
\centering
\caption{\newblock{Comparative response of classical theories and branch-level localization on the six test cases.}}
\label{tab:comparative}
\scriptsize
\setlength{\tabcolsep}{3pt}
\begin{tabular}{@{}p{0.10\textwidth}p{0.17\textwidth}p{0.15\textwidth}p{0.13\textwidth}p{0.13\textwidth}p{0.21\textwidth}@{}}
\toprule
Case & IEEE Std.~1459 & Akagi $p$--$q$ & Czarnecki CPC & FBD & Branch-Level Localization\\
\midrule
Balanced $RL$ (\S\ref{subsec:balanced_RL}) & Per-phase $p_a^{\rm 1459}\!=\!P[1\!-\!\cos 2\omega t]$; phase-shifted $2\theta$ from branch form & $\bar p$ constant, $\bar q\neq 0$; per-phase profile not exposed & Reactive current globally; per-phase storage not assigned & Active and nonactive currents globally; no per-phase split & Per-phase $W_{L,x}'(t)$ recovered; analytical $2\theta$ phase shift exposed\\
Resistive open phase (\S\ref{subsec:resistive_unbalanced}) & $S>P$ from one-branch asymmetry & $q_{\alpha\beta}\neq 0$ for purely resistive load & Active currents in all 3 phases including the open one & Nonactive current present & $W_L'=W_C'\equiv 0$; entire $p_{\rm term}$ as Joule in $R_{ab}$\\
Nonlinear switched (\S\ref{subsec:nonlinear_switched}) & Large $D$, $Q_F$ from harmonic spread & $q_{\alpha\beta}\neq 0$ despite no storage element & Scattered, generated currents nonzero & Void/nonactive current populates & $W_L'=W_C'\equiv 0$; time-varying $G(t)$ captures switching\\
Topology indeterminacy (\S\ref{subsec:topology_indeterminacy}) & Single $S$ regardless of topology & Single $p$--$q$ pair regardless of topology & Single decomposition regardless of topology & Single nonactive current regardless of topology & Two distinct localizations; both admissible (Cor.~\ref{cor:topology}); series-parallel duality (Cor.~\ref{cor:series_parallel})\\
Fluctuating phase (\S\ref{subsec:fluctuating_phase}) & $Q_1\!\to\!0$; storage misclassified as distortion & $\bar q_{\alpha\beta}\!\to\!0$; storage missed & Reactive current of fundamental small & Nonactive current small in fundamental & Time-varying $W_L'$ or $W_C'$ recovered; cross-topology $G_x(t)$ check\\
4-wire nonlinear (\S\ref{subsec:fourwire_nonlinear}) & Per-phase $S$, no decomposition of hysteresis loss & $p$--$q$ aggregates across phases; switching not localized & Decomposes per phase but assigns no storage to hysteresis & Nonactive current populates without locating loss & Per-phase $p_R$, $W_L'$, $W_C'$ identified; hysteresis loss localized to phase $a$\\
\bottomrule
\end{tabular}
\end{table*}

\newblock{The reading of Table~\ref{tab:comparative} is positive, not adversarial. Classical theories do not contradict themselves on the test cases; they answer the question they were designed for. They report aggregated rating quantities, projection onto fixed coordinates, or current decompositions that admit specific compensator designs. The branch-level localization answers a different question: where, inside the load, is energy dissipated or stored at any instant? The two answers are compatible and serve complementary needs.}

Three regimes can be identified on the test set. In the first regime, balanced sinusoidal LTI loads, all theories return numerically equivalent answers up to definition-dependent constants, and the branch decomposition coincides with the standard $P$, $Q_1$ pair scaled by the topology. In the second regime, including the balanced and unbalanced resistive cases, classical theories disagree on the labeling of nonactive components but agree numerically on aggregated $S$ or its analogue; the branch decomposition is consistent with all of them once the labels are resolved against the identified topology. In the third regime, including the topology-indeterminacy and fluctuating-phase cases, classical theories return well-defined coordinate-domain numbers that do not, by themselves, decide the physical question of storage location; only an identified equivalent does.
\section{Discussion}
\label{sec:discussion}
\subsection{Scope and limitations}
The framework operates at the lumped-circuit level. Field-theoretic interpretations of polyphase energy flow~\cite{emanuel2004poynting,calamaro2015review} address questions that the present analysis does not, including the physical localization of energy on conductor surfaces and the role of return paths. Spatial-vector formulations~\cite{ferrero1991space} project waveforms onto a fixed two-dimensional plane and provide compact diagnostic indices that the branch decomposition complements rather than competes with.

The localization is unique only once a topology is fixed. When several admissible topologies coexist, as in three-wire systems without measured neutral, the choice is a modeling act and must be reported. Theorem~\ref{thm:indeterminacy} and Corollary~\ref{cor:topology} are the formal acknowledgment of this constraint. Practical guidance: select the topology that matches the physical apparatus when known; select the topology that yields the lowest energy residual $\rho_E$ otherwise; report the choice and, if possible, both alternative localizations.

The synthetic test cases isolate the physics from measurement-chain effects. Real records require uncertainty handling, synchronization control, and derivative-or-primitive reconstruction; these aspects are addressed in the identification literature~\cite{montoyaTPWRD1,arrabal2025experimental}  that supplies the equivalents used here. When the analysis window does not contain enough waveform diversity to resolve all branch parameters, the localization framework is honest about it: rank, conditioning, and residual diagnostics flag the case rather than producing a misleading branch decomposition.
\subsection{Operational implications}
Three implications stand out. First, compensator design benefits from operating on identified storage rather than on aggregated reactive figures, particularly when the operational objective includes phase-loading equalization in addition to feeder-loss minimization. Second, power-quality classification can move from waveform-based labels (harmonic, distortion, asymmetry) to mechanism-based labels (Joule, magnetic exchange, electric exchange, time-varying storage). Third, fault detection benefits from the appearance of energy terms inconsistent with the nominal topology, because such mismatches are direct signatures of structural change.
\subsection{Position relative to classical theories}
\newblock{The position adopted in this paper is conservative on definitions and operational on interpretation. No classical theory is declared invalid. IEEE Std.~1459 remains the legal and metering reference. CPC remains a powerful diagnostic and compensation tool. The instantaneous $p$--$q$ formulation remains the foundation of converter control. Fryze-Buchholz-Depenbrock remains the natural framework for non-active currents in multi-conductor circuits. Each answers a question that the branch-level localization does not. The localization framework, in turn, answers a question they do not: where, instant by instant, is the energy dissipated or stored?}
\section{Conclusions}
\label{sec:conclusion}
This paper develops a branch-level energy-localization framework for three-phase loads. The instantaneous terminal power of an admissible lumped equivalent is decomposed uniquely into Joule dissipation and the time rate of stored magnetic and electric energy, branch by branch (Theorem~\ref{thm:localization}). When multiple admissible topologies reproduce the same terminal data, the localizations differ; the topology choice is a modeling decision (Theorem~\ref{thm:indeterminacy}, Corollary~\ref{cor:topology}). A Generalized Energetic Duality Theorem (Theorem~\ref{thm:duality}) shows that, for any branch terminal pair, every admissible topology family can be parameterized to reproduce the same branch energy decomposition: classical Norton-Thevenin, series-parallel, and $L\leftrightarrow C$ dualities are restrictions of this principle to LTI sinusoidal regimes, where the time-varying parameter trajectories degenerate to constants. Six test cases illustrate the framework on situations that classical theories describe in incomplete or inconsistent ways: aggregated invariance hiding per-phase exchange and the analytical phase-shift between branch-level and IEEE Std.~1459 instantaneous decompositions; an open-phase resistive load where CPC produces ghost active currents in every phase and $p$--$q$ reports pseudo-reactive components for a purely resistive circuit; switched-resistive nonlinear loads in which classical distortion, scattered, void, and instantaneous reactive components are coordinate-domain artifacts of a purely dissipative process; three-wire topology indeterminacy; fluctuating-phase loads invisible to fixed-coordinate projections; and a four-wire nonlinear load with hysteretic, linear, and switched branches in a single section, where hysteresis loss is localized to its branch via an effective time-varying conductance. Future work includes experimental implementation with full measurement-chain handling, extension to time-varying nonlinear elements with explicit nonlinear constitutive relations, and integration of the localization with port-Hamiltonian and field-theoretic analyses.

\bibliographystyle{IEEEtran}
\bibliography{rebuild_references}

\end{document}